\newcommand{\ignore}[1]{}
\newtheorem{theorem}{Theorem}
\newtheorem{lemma}{Lemma}
\newtheorem{corollary}[theorem]{Corollary}
\renewcommand{\Pr}{{\bf Pr}}
\newcommand{\E}{{\bf E}}
\newcommand{\dist}{{\rm dist}}
\newcommand{\FF}{\mathbb{F}}
\newcommand{\wt}{{\rm wt}}
\newcommand{\bias}{{\rm bias}}
\begin{document}

\title{Derandomizing Chernoff Bound with Union Bound\\
with an Application to $k$-wise Independent Sets}
\author{ Nader H. Bshouty
\\ Technion, Haifa, Israel\\
bshouty@ca.technion.ac.il}


%
\maketitle

\begin{abstract}
Derandomization of Chernoff bound with union bound is already proven in many papers.
We here give another explicit version of it that obtains a construction of size
that is arbitrary close to the probabilistic nonconstructive size.

We apply this to give a new simple polynomial time constructions of
almost $k$-wise independent sets. We also give almost tight lower bounds
for the size of $k$-wise independent sets.
\end{abstract}

\section{Introduction}
Derandomization of Chernoff bound with union bound is already proven in many papers.
See for example~\cite{WX08,AG10,AHK12}.
We here give another explicit version of it that obtains a construction
of a size that is arbitrary close to the size of the
probabilistic nonconstructive size.

We then show that, for some
construction problems, one can combine this method with the method of conditional probabilities
to get a derandomization that runs in time polylogarithmic in the
sample size.

In this paper we give the following application of this result:

For $p\in \Re$, the {\it distance in the $L_p$-norm between two probability distribution}
$D$ and $Q$ over a sample space $S$ is
$$\|D-Q\|_p=\left(\sum_{s\in S}|D(s)-Q(s)|^p\right)^{\frac{1}{p}}$$
and for $p=\infty$ is $\|D-Q\|_p=\max_{s\in S}|D(s)-Q(s)|$.

Let $S=\{0,1\}^n$. A {\it uniform distribution $U$ over} $S$
is a distribution where $U(s)=1/2^n$ for all $s \in S$.
For $I=(i_1,\ldots,i_k)$ where
$1\le i_1<\cdots<i_k\le n$ the {\it distribution $D_I$ restricted to} $I$
over $\{0,1\}^k$ is $D_I(\sigma_1,\ldots,\sigma_k)=\Pr_{s\sim D}[s_{i_1}=\sigma_1\wedge
\cdots\wedge s_{i_k}=\sigma_k]$. A distribution $Q$ over $\{0,1\}^n$ is called
{\it $\epsilon$-almost $k$-wise independent in the $L_p$-norm}
if for any
$I=(i_1,\ldots,i_k)$ we have $\|Q_I-U_I\|_p\le \epsilon$.

The goal is to construct $S'\subset \{0,1\}^n$ of small size such that the
uniform distribution on $S'$ is $\epsilon$-almost $k$-wise independent in the $L_p$-norm.
We will just say that $S'$ is {\it $\epsilon$-almost $k$-wise independent set} in the $L_p$-norm.
The following table summarizes the results from the literature and our results.
The table shows the sizes without the small terms  $\log\log n$, $\log(1/\epsilon)$ and $k$.
See the exact sizes in the table in Section~\ref{ST} and the theorems in Subsection~\ref{Theorems}
and Section~\ref{LowerB}.

\begin{center}
\begin{tabular}{|c|c|c|c|c|}
Construction Time &Reference & Size for $L_\infty$ & Size$^*$ for $L_\infty$ & Size for $L_1$\\
\hline
Poly. time & \cite{AGHP90} & $\frac{\log^2n}{\epsilon^2}$
& $\frac{2^{2k}\log^2n}{\epsilon^2}$
& $\frac{2^k\log^2n}{\epsilon^2}$\\
\hline
Poly. time & AGC+Hadamard& $\frac{\log n}{\epsilon^3}$
&$\frac{2^{3k}\log n}{\epsilon^3}$
& $\frac{2^{3k/2}\log n}{\epsilon^3}$\\
\hline
Poly. time & \cite{BS13} & $\frac{\log^{5/4} n}{\epsilon^{2.5}}$
&$\frac{2^{2.5k}\log^{5/4} n}{\epsilon^{2.5}}$
&$\frac{2^{5k/4}\log^{5/4} n}{\epsilon^{2.5}}$\\
\hline
Poly. time & Ours & $\frac{\log n}{2^k\epsilon^3}$
& $\frac{2^{2k}\log n}{\epsilon^3}$
&$\frac{2^k+\log n}{\epsilon^3}$\\
\hline
$n^{O(k)}$ time& Ours & $\frac{\log n}{2^k\epsilon^2}$& $\frac{2^k\log n}{\epsilon^2}$&$\frac{2^k+\log n}{\epsilon^2}$\\
\hline \hline
Lower Bound & Ours & $\frac{\log n}{2^k\epsilon^2}$& $\frac{2^k\log n}{\epsilon^2}$& $\frac{\log n}{\epsilon^2}$\\
\hline
\end{tabular}
\end{center}
{\bf Table} 1. * {\small For $L_\infty$ (column 3) we have $\max_{I,\sigma} |\Pr[s_I=\sigma]-1/2^k|\le \epsilon$ for $\epsilon<1/2^k$.  We also have added the bounds (column 4) when $(1/2^k)(1-\epsilon)\le \max_{I,\sigma} \Pr[s_I=\sigma]\le (1/2^k)(1+\epsilon)$ for any $\epsilon<1$. For $L_1$ (column 5) we have $\sum_{\sigma\in \{0,1\}^k}|\Pr[s_I=\sigma]-1/2^k|\le \epsilon$ for any $\epsilon<1$.}
$$ $$

With the techniques used in this paper,
all the results in this table can be easily generalized to any product
distribution and any alphabet.

Our construction is very simple. We first give a
derandomization of Chernoff bound with union bound. For this we use the pessimistic method with a pessimistic estimator (potential function) that gives constructions of size that are arbitrary close to the size of the
probabilistic nonconstructive size. Those constructions are polynomial in the space size that is exponential in the dimension of the problem. We then use the conditional probability method that reduces the complexity to polylogarithmic time in the sample size. Those are used to construct a dense perfect hash family and an $\epsilon$-almost $k$-wise independent set of small dimension. We then combine both constructions to get the final construction. We also give lower bounds that are almost tight
to the non-constructive constructions. Our constructions have sizes that are within a factor of $1/\epsilon$ from the lower bounds.

Our construction can be easily generalized to any product distribution and any alphabet (not necessarily alphabet of size power of prime) and can be used for other dense and balance constructions. See some other techniques for deterministic and randomized dense, balance and non-dense constructions in~\cite{AG09,AG10,AM06,B14b,BG,IK,KM94,PR,NSS} and references within.

This paper is organized as follows. In Section~\ref{Dera} we give the
main two theorems of the derandomization and show how to use the method of conditional
probabilities to reduce its time complexity to polylogarithmic in the size of the
sample space. In Section~\ref{LCC} we give an exponential time constructions
of small size for a code that achieves the Gilbert-Varshamov bound, $\epsilon$-balance
error-correcting code and $\epsilon$-bias sample space. In Section~\ref{ST} we
give all the constructions in the above table. Then in Section~5 we give the lower bounds.

\section{The Derandomization}\label{Dera}

In this section we give the derandomization of Chernoff bound with union bound

The $q$-ary entropy function is
$$H_q(p)=p\log_q \frac{q-1}{p}+(1-p)\log_q\frac{1}{1-p}.$$
The Kullback-Leibler divergence between Bernoulli distributed random variables
with parameter $\lambda$ and $\eta$ is
$$D(\lambda||\eta)=\lambda\ln\frac{\lambda}{\eta}+(1-\lambda)\ln\left(\frac{1-\lambda}{1-\eta}\right).$$

For two integers $n$ and $m$ we denote $[n]=\{1,2,\ldots,n\}$ and $(n,m]=\{n+1,n+2,\ldots,m\}$. For a finite multiset of objects $S$ we denote by $U_S$ the uniform distribution over $S$.

We prove
\begin{theorem}\label{Theorem1} Let $S$ be a finite sample space
with a probability distribution $D$.
Let $X_1,\ldots,X_{N}$ be random variables over
$S$ that take values from $\{0,1\}$.
Let $N'\le N$ and $\{\lambda_i\}_{i\in [N]}$ be such that
$0<\lambda_i<p_{i}\le \E_{s\sim D}[X_i]$ and $1>\lambda_{j}>p_{j}\ge \E_{s\sim D}[X_j]$
for all $i\in [N']$ and $j\in (N',N]$.
Let $m$ be such that
$$P_0:=\sum_{i=1}^N e^{- D(\lambda_i||p_i)\cdot m}\le 1.$$
There exists a multiset $S'=\{s_1,s_2,\ldots,s_{m}\}\subseteq S$ such that for
all $i\in[N']$ and $j\in(N',N]$
$$\underset{s\sim U_{S'}}\E[X_i(s)]\ge \lambda_i,\ \ \ \underset{s\sim U_{S'}}\E[X_j(s)]\le \lambda_j.$$
In particular, the result follows for
$$m\ge\max_{i\in [N]}\frac{\ln N}{ D(\lambda_i||p_i)}=\max_{i\in[N]}\frac{\log_{q_i} N}{1-H_{q_i}(\lambda_i)}$$
where $q_i=1/(1-p_i)$.
\end{theorem}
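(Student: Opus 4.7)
The plan is to apply Raghavan's method of pessimistic estimators with the Chernoff exponential-moment bound as the estimator. For each $i\in[N]$ I would pick the optimal tilt parameter $t_i=\ln\frac{\lambda_i(1-p_i)}{(1-\lambda_i)p_i}$; this is negative for $i\in[N']$ and positive for $j\in(N',N]$, and satisfies $e^{-t_i\lambda_i}(1-p_i+p_i e^{t_i})=e^{-D(\lambda_i||p_i)}$. The natural potential after $s_1,\dots,s_k$ have been chosen is
$$\phi_k(s_1,\dots,s_k)=\sum_{i=1}^{N} e^{-t_i\lambda_i m}\left(\prod_{l=1}^{k} e^{t_i X_i(s_l)}\right)(1-p_i+p_i e^{t_i})^{m-k},$$
whose initial value is $\phi_0=P_0\le 1$ by hypothesis.

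The central supermartingale step is $\E_{s_{k+1}\sim D}[\phi_{k+1}\mid s_1,\dots,s_k]\le \phi_k$, which reduces term by term to $\E_D[e^{t_i X_i}]\le 1-p_i+p_i e^{t_i}$. Writing $q_i:=\E_D[X_i]\in[0,1]$, the left side equals $1-q_i+q_i e^{t_i}$; the map $q\mapsto 1-q+qe^t$ is decreasing in $q$ for $t<0$ and increasing for $t>0$, so the hypotheses ($q_i\ge p_i$ for $i\in[N']$ paired with $t_i<0$, and $q_j\le p_j$ for $j\in(N',N]$ paired with $t_j>0$) give the bound in both regimes.

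Using this invariant, I would greedily pick $s_{k+1}\in S$ to minimize $\phi_{k+1}$; the averaging bound guarantees $\phi_{k+1}\le \phi_k$, so that $\phi_m\le P_0\le 1$. To extract the conclusion, observe that the $i$-th summand of $\phi_m$ equals $\exp\bigl(t_i(\sum_{l=1}^m X_i(s_l)-\lambda_i m)\bigr)$, which is \emph{strictly} greater than $1$ precisely when index $i$ violates its required empirical bound, and at most $1$ otherwise; since all summands are nonnegative, $\phi_m\le 1$ forces no such violation, giving the claim for $S'=\{s_1,\dots,s_m\}$. The ``in particular'' clause follows since $m\ge \ln N/D(\lambda_i||p_i)$ for all $i$ yields $P_0\le N\cdot e^{-\ln N}=1$, and the equivalent form with $\log_{q_i}N/(1-H_{q_i}(\lambda_i))$, $q_i=1/(1-p_i)$, comes from the identity $D(\lambda||p)=\ln q\cdot(1-H_q(\lambda))$, which is a short manipulation of definitions.

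The main obstacle, as is typical with the pessimistic estimator method, is really only bookkeeping: one must simultaneously handle upper and lower tails and keep the signs of the $t_i$ matched with whether $q_i$ exceeds or is dominated by $p_i$, and one must line up the strict-versus-weak inequalities carefully so that the non-strict bound $\phi_m\le 1$ (rather than $\phi_m<1$) suffices to rule out every violation.
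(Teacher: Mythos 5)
Your proposal is correct and is essentially the paper's own argument: your potential $\phi_k$ coincides exactly with the paper's pessimistic estimator $P_k$ under the substitution $e^{t_i}=\alpha_i=\frac{(1-p_i)\lambda_i}{p_i(1-\lambda_i)}$ and $1-p_i+p_ie^{t_i}=\gamma_i^{-1}$, and the greedy supermartingale step, the term-by-term extraction from $\phi_m\le 1$, and the ``in particular'' computation all match the paper's proof. The only cosmetic issue is reusing $q_i$ both for $\E_D[X_i]$ and for $1/(1-p_i)$, which you should rename to avoid a clash with the theorem statement.
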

\begin{proof}
We give an algorithm that constructs $S'$. Let
$$\alpha_i=\frac{\lambda_i}{(q_i-1)(1-\lambda_i)}=\frac{(1-p_i)\lambda_i}{p_i(1-\lambda_i)}
\ \ \mbox{and} \ \ \ \gamma_i=\frac{1}{\alpha_i p_i+(1-p_i)}=\frac{1-\lambda_i}{1-p_i}.$$
Suppose that the algorithm has already chosen
$s_1,\ldots,s_\ell$.
Consider the potential function
$$P_{j}:=\sum_{i=1}^{N} e^{-D(\lambda_i||p_i)m}\gamma_i^j \alpha_i^{Z_{j,i}}$$
where $Z_{j,i}=X_i(s_1)+\cdots+X_i(s_j)$ for $j=0,1,\ldots,\ell$.
Here, $Z_{0,i}=0$.
We now show how the algorithm chooses $s_{\ell+1}$. Consider the random variable
$$P'_{\ell+1}(s)=\sum_{i=1}^{N} e^{-D(\lambda_i||p_i)m}\gamma_i^{\ell+1} \alpha_i^{Z_{\ell,i}+X_i(s)}.$$
Since $\alpha_i<1$ for $i\in[N']$ and $\alpha_j>1$ for $j\in(N',N]$, we have
 \begin{eqnarray}\label{eq1}
\underset{s\sim D}\E[P'_{\ell+1}(s)]&=& \sum_{i=1}^{N} e^{-D(\lambda_i||p_i)m}\gamma_i^{\ell+1} \alpha_i^{Z_{\ell,i}}\underset{s\sim D}\E\left[\alpha_i^{X_i(s)}\right]\nonumber\\
&\le& \sum_{i=1}^{N} e^{-D(\lambda_i||p_i)m}\gamma_i^{\ell} \alpha_i^{Z_{\ell,i}}\nonumber\\
&=&P_{\ell}\nonumber.
\end{eqnarray}

Now the algorithm chooses $s_{\ell+1}\in S$ that satisfies $P'_{\ell+1}(s_{\ell+1})\le P_{\ell}.$
Let $P_{\ell+1}=P'_{\ell+1}(s_{\ell+1})$.
Then $P_0\le 1$ and $P_{\ell+1}\le P_{\ell}.$
Therefore

$$P_m=\sum_{i=1}^{N} e^{-D(\lambda_i||p_i)m}\gamma_i^m \alpha_i^{Z_{m,i}}\le 1 .$$
In particular, for all $i\in[N]$
$$e^{-D(\lambda_i||p_i)m}\gamma_i^m \alpha_i^{Z_{m,i}}\le 1.$$
Therefore, for all $i\in[N]$,
\begin{eqnarray*}
\alpha_i^{Z_{m,i}}\le (\gamma_i^{-1}\cdot e^{D(\lambda_i||p_i)})^m=   \alpha_i^{\lambda_i m}.
\end{eqnarray*}
Thus, for all $i\in [N]$ we have $$\alpha_i^{Z_{m,i}}\le \alpha_i^{\lambda_i m}.$$
Since $\alpha_i<1$ for all $i\in [N']$ and $\alpha_{j}>1$ for all $j\in (N',N]$ we have
$Z_{m,i}\ge \lambda_i m$ for all $i\in [N']$ and $Z_{m,j}\le \lambda_{j} m$ for all $j\in (N',N]$.
Since $\E_{s\sim U_{S'}}[X_k(s)]=Z_{m,k}/m$ for all $k\in [N]$, the result follows.
\end{proof}

We now give the bit-time complexity of the algorithm described in the proof of Theorem~\ref{Theorem1}
\begin{theorem} \label{Theorem2}
Let all notation and assumptions be as in Theorem~\ref{Theorem1}.
Suppose that for every $s\in S$ all the values $X_1(s),\ldots,X_N(s)$
can be computed in bit-time\footnote{Here $\tilde O(N)$ is
$O(N\cdot poly(\log T))$ where $T$ is the time complexity
of the construction.}  $\tilde O(N)$. Let $\mu_i=D(\lambda_i||p_i)$, $\mu=\min(1,\min_i\mu_i)$.
Let $\tau=\max_i\max(\alpha_i,\gamma_i)$ where $$\alpha_i=\frac{(1-p_i)\lambda_i}{p_i(1-\lambda_i)}
\ \ \mbox{and} \ \ \ \gamma_i=\frac{1-\lambda_i}{1-p_i}.$$
There is an algorithm that runs in bit-time $$T=\tilde O(|S|\cdot Nm\cdot (m\log \tau+\log (1/\mu))$$
and outputs a multiset $S'=\{s_1,\ldots,s_m,s_{m+1}\}\subseteq S$ such that for
all $i\in[N']$ and $j\in(N',N]$
$$\underset{s\sim U_{S'}}\E [X_i(s)]\ge \lambda_i,\ \ \underset{s\sim U_{S'}}\E[X_j(s)]\le \lambda_j$$
where $U_{S'}$ is the uniform distribution on $S'$.
\end{theorem}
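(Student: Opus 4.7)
The plan is a routine bit-complexity audit of the greedy pessimistic-estimator algorithm from the proof of Theorem~\ref{Theorem1}, the one subtle point being finite-precision arithmetic. I would maintain, after step $\ell$, the integer counters $Z_{\ell,i}\in\{0,\ldots,m\}$ together with fixed-point approximations $\tilde T_{\ell,i}$ of the individual potential terms $T_{\ell,i}:=e^{-\mu_i m}\gamma_i^{\ell}\alpha_i^{Z_{\ell,i}}$, so that the maintained potential is $\tilde P_\ell=\sum_i \tilde T_{\ell,i}$. Since each $T_{\ell,i}$ has magnitude in $[\tau^{-O(m)}e^{-m},\tau^{O(m)}]$ with the smallest meaningful scale controlled by $e^{-\mu_i m}$ and $\mu_i\ge\mu$, a bit-width $B=\tilde O(m\log\tau+\log(1/\mu))$ is enough to represent all quantities faithfully. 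Precomputing $\alpha_i,\gamma_i,e^{-\mu_i m}$ to that precision costs $\tilde O(NB)$ and is a lower-order term.

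At step $\ell$ the algorithm scans all $s\in S$ and evaluates
$$\tilde P'_{\ell+1}(s)\;=\;\sum_{i=1}^N \tilde T_{\ell,i}\cdot \gamma_i\cdot \alpha_i^{X_i(s)}.$$
By the hypothesis on $X_i$, the values $X_1(s),\ldots,X_N(s)$ are obtained in $\tilde O(N)$ bit-time, after which the sum requires $N$ $B$-bit multiplications and $N$ $B$-bit additions, i.e.\ $\tilde O(NB)$ bit-time per candidate. Multiplying by $|S|$ candidates per round and $m$ rounds yields total bit-time $\tilde O(|S|\cdot Nm\cdot B)$, which becomes the claimed bound after substituting $B$. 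The per-round update of $\tilde T_{\ell+1,i}$ and $Z_{\ell+1,i}$ is absorbed in this count.

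The main obstacle is the precision analysis: certifying that the invariant $\tilde P_{\ell+1}\le\tilde P_\ell$ can be attained under rounding. I would round every arithmetic operation upward, preserving $\tilde T_{\ell,i}\ge T_{\ell,i}$, and choose $B$ so that each round inflates the maintained potential by a multiplicative factor at most $1+1/(4m)$. The expectation calculation at the heart of the proof of Theorem~\ref{Theorem1} shows that $\E_{s\sim D}[P'_{\ell+1}(s)]\le P_\ell\le\tilde P_\ell$, so an admissible $s_{\ell+1}\in S$ exists at each step; after $m$ rounds the accumulated slack is $\tilde P_m\le e^{1/4}\cdot\tilde P_0$. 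Taking $\tilde P_0$ slightly below $1$ (absorbed into $B$) and appending one additional sample $s_{m+1}$—this is why the output multiset has size $m+1$—gives enough headroom in the concluding inequality $\alpha_i^{Z_{m+1,i}}\le \alpha_i^{\lambda_i(m+1)}$ to recover $\E_{s\sim U_{S'}}[X_i(s)]\ge\lambda_i$ for $i\in[N']$ and $\le\lambda_j$ for $j\in(N',N]$, exactly as in the exact-arithmetic argument of Theorem~\ref{Theorem1}.
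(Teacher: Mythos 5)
Your cost accounting (fixed-point numbers of $\tilde O(m\log\tau+\log(1/\mu))$ bits, $O(|S|\cdot Nm)$ arithmetic operations, hence the claimed total bit-time) matches the paper, but the heart of the theorem --- where the slack that absorbs the rounding error comes from --- is missing, and the step you substitute for it does not work. Under the hypotheses one only knows $P_0=\sum_i e^{-\mu_i m}\le 1$, and it may equal $1$; ``taking $\tilde P_0$ slightly below $1$ (absorbed into $B$)'' is not something you can do, since increasing the precision never decreases $P_0$. With your per-round multiplicative inflation $1+1/(4m)$ you only get $\tilde P_m\le e^{1/4}\tilde P_0$, which can exceed $1$ by a constant factor; the per-term conclusion then degrades to $\alpha_i^{Z_{m,i}}\le e^{1/4}\alpha_i^{\lambda_i m}$, i.e.\ a deficit of $\frac{1/4}{\ln(1/\alpha_i)}$ in $Z_{m,i}$ for $i\in[N']$ (and a matching surplus for $j\in(N',N]$), which is unbounded as $\lambda_i\to p_i$ (i.e.\ $\alpha_i\to 1$). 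Appending one extra sample cannot repair this: it changes each count by at most $1$ while the denominator grows to $m+1$, and in any case a single $s_{m+1}$ cannot simultaneously have $X_i(s_{m+1})=1$ for all $i\in[N']$ and $X_j(s_{m+1})=0$ for all $j\in(N',N]$; you give neither a rule for choosing it nor an argument that it helps all $N$ constraints at once.

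The paper creates the slack differently, and that is precisely the role of the $(m+1)$-st sample: the algorithm is run with the modified potential $P_j=\sum_i e^{-\mu_i(m+1)}\gamma_i^{\,j}\alpha_i^{Z_{j,i}}$, i.e.\ it plans for $m+1$ samples while the hypothesis remains $\sum_i e^{-\mu_i m}\le 1$. This gives $P_0\le e^{-\mu}\le 1-\mu/2$, an explicit additive slack of $\mu/2$. One then requires the per-step absolute error to be at most $\mu/(4m)$ --- this is exactly why $\log(1/\mu)$ must appear in the bit-width, a term your formula contains but your error analysis never uses --- so that $P_{m+1}\le 1-\mu/2+(m+1)\frac{\mu}{4m}<1$, and the endgame of Theorem~\ref{Theorem1} applies verbatim with $m+1$ in place of $m$, yielding $Z_{m+1,i}\ge\lambda_i(m+1)$ for $i\in[N']$ and $Z_{m+1,j}\le\lambda_j(m+1)$ for $j\in(N',N]$ with no residual deficit to explain away. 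Your proposal needs this (or an equivalent source of slack quantified in terms of $\mu$, with per-round error scaled accordingly) to close.
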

\begin{proof} Consider the algorithm in the proof of Theorem~\ref{Theorem1}.
Notice that here the size of $S'$ is $m+1$ and not $m$ as in Theorem~\ref{Theorem1}.
So the potential function used in the algorithm is
$$P_{j}:=\sum_{i=1}^{N} e^{-\mu_i (m+1)}\gamma_i^j \alpha_i^{Z_{j,i}}$$
but with the same assumption
$$\sum_{i=1}^{N} e^{-\mu_i m}\le 1$$ as in Theorem~\ref{Theorem1}.
Therefore
$$P_0=\sum_{i=1}^N e^{-{\mu_i (m+1)}}\le e^{-\mu}\sum_{i=1}^N e^{-{\mu_i m}}\le 1-\frac{\mu}{2}.$$

First, notice that $\alpha_k\le \tau$ and  $\gamma_k\le \tau$
for all $k\in [N]$. Let $B$ be a positive integer that will be determined later.
If we use $\Delta:=B+1+\lceil \log (\tau+1)\rceil$
bits for the representation of $e^{-\mu_im}$, $\alpha_k$ and $\gamma_k$, i.e.,
the absolute error is less than $2^{-B}$,
then the absolute error in computing
$r_k:=e^{-\mu_im}\gamma_k^{\ell}\alpha_k^{Z_{\ell,k}}$ is at most $O(\ell\tau^{2\ell}2^{-B})$
and in computing $P_\ell=\sum_k r_k$ is at most $O(N\ell\tau^{2\ell}2^{-B})=O(Nm\tau^{2m}2^{-B})$.
This error is at most $\mu/(4m)$ when
$B\ge 2m\log \tau+\log N+2\log m+\log(1/\mu)+2$.
Notice that, since the absolute error is less than $\mu/(4m)$,
we have $P_{\ell+1}\le P_\ell +\mu/(4m)$.
Since $P_0\le 1-\mu/2$,
we get $$P_{m+1}\le 1-\frac{\mu}{2}+\sum_{i=1}^{m+1} \frac{\mu}{4m}<1$$ which,
as shown in the proof of Theorem~\ref{Theorem1}, gives the
required bound.

Now arithmetic computations with $\Delta=B+1+\lceil \log (\tau+1)\rceil$ bit numbers take
bit-time $\tilde O(\Delta)=\tilde O(m\log \tau+\log (1/\mu))$.
Since the number of arithmetic operations
in the algorithm is $O(|S|\cdot Nm)$, the result follows.
\end{proof}

In particular we have

\begin{corollary}\label{Coro3} Let all the notation be as in Theorem~\ref{Theorem1} and  $0<\epsilon_i<1$ for all $i\in [N]$. Let $\alpha=\min_i\min(1/(1-p_i),1/(1-\lambda_i))$. For
$$m\ge \frac{3\ln N}{\min_ip_i\epsilon_i^2}$$
the algorithm runs in bit-time $T=\tilde O(|S|\cdot Nm^2\log \alpha)$ and outputs $S'=\{s_1,\ldots,s_{m+1}\}\subseteq S$ such that for
all $i\in[N']$ and $j\in(N',N]$
$$\underset{s\sim U_{S'}}{\E}[X_i(s)]\ge \lambda_i:=(1-\epsilon_i)p_i,\ \ \ \underset{s\sim U_{S'}}\E[X_j(s)]\le \lambda_j:=(1+\epsilon_j)p_j.$$
\end{corollary}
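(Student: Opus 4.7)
The plan is to apply Theorem~\ref{Theorem2} with the choice $\lambda_i=(1-\epsilon_i)p_i$ for $i\in[N']$ and $\lambda_j=(1+\epsilon_j)p_j$ for $j\in(N',N]$. Two items must then be verified: first, the summability hypothesis $P_0=\sum_{i=1}^N e^{-D(\lambda_i\|p_i)m}\le 1$ that underlies Theorem~\ref{Theorem1}; and second, that the bit-time estimate of Theorem~\ref{Theorem2} collapses to the claimed $\tilde O(|S|\cdot Nm^2\log\alpha)$ after substitution.

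The central ingredient is the standard quadratic lower bound on the KL divergence between biased Bernoullis,
$$D\bigl((1\pm\epsilon)p\,\big\|\,p\bigr)\ \ge\ \frac{p\,\epsilon^2}{3}\qquad(0<p<1,\ 0<\epsilon<1),$$
which follows by Taylor-expanding $\lambda\mapsto D(\lambda\|p)$ at $\lambda=p$ (its second derivative is $1/(\lambda(1-\lambda))$) and picking the constant $3$ so that a single inequality handles both the upper and lower tails uniformly. With this in hand, the hypothesis $m\ge 3\ln N/(\min_i p_i\epsilon_i^2)$ immediately forces $D(\lambda_i\|p_i)\cdot m\ge \ln N$ for every $i$, so $P_0\le N\cdot e^{-\ln N}=1$. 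Thus the precondition of Theorem~\ref{Theorem1} holds, and the algorithm of Theorem~\ref{Theorem2} produces the desired multiset $S'=\{s_1,\ldots,s_{m+1}\}$ with the required one-sided expectations.

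For the running time, the same KL bound gives $\mu\ge\min_i p_i\epsilon_i^2/3\ge\ln N/m$, so $\log(1/\mu)=O(\log m+\log\ln N)$ and is absorbed into the $\tilde O$ notation. The factor $\tau=\max_i\max(\alpha_i,\gamma_i)$ is controlled by elementary algebra: on the lower-tail side $\gamma_i=(1-\lambda_i)/(1-p_i)\le 1/(1-p_i)$, and on the upper-tail side
$$\alpha_j\ =\ \frac{(1-p_j)(1+\epsilon_j)}{1-\lambda_j}\ \le\ \frac{2}{1-\lambda_j},$$
using $\lambda_j=(1+\epsilon_j)p_j<1$. Both estimates yield $\tau=O(\alpha)$, hence $\log\tau=O(\log\alpha)$. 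Plugging into the bound $\tilde O(|S|\cdot Nm(m\log\tau+\log(1/\mu)))$ of Theorem~\ref{Theorem2} collapses it to the stated $\tilde O(|S|\cdot Nm^2\log\alpha)$. The only non-mechanical step is pinning down the constant $3$ in the KL bound so that it matches the same constant in the hypothesis on $m$; once that is done, the rest is direct substitution into Theorems~\ref{Theorem1} and~\ref{Theorem2}.
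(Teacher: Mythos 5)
Your proposal is correct and follows essentially the same route as the paper: verify $P_0\le 1$ via the quadratic KL lower bound $D((1\pm\epsilon_i)p_i\,\|\,p_i)\ge p_i\epsilon_i^2/3$ (the paper quotes the sharper constants $2.5887$ and $2$), then invoke Theorem~\ref{Theorem2}, absorbing $\log(1/\mu)$ and the $1/p_i$ factors into the $\tilde O$ and bounding $\log\tau$ by $O(\log\alpha)$. The only caveat, which your argument shares with the paper's own proof, is that the step $\tau=O(\alpha)$ requires reading $\alpha$ as $\max_i\max(1/(1-p_i),1/(1-\lambda_i))$ rather than the minimum written in the statement of the corollary.
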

\begin{proof} Follows from the fact that if $\lambda_i=(1+\epsilon_i)p_i$ then $D(\lambda_i||p_i)\ge p_i\epsilon_i^2/2.5887\ge p_i\epsilon_i^2/3$ and if $\lambda_i=(1-\epsilon_i)p_i$ then $D(\lambda_i||p_i)\ge p_i\epsilon_i^2/2\ge p_i\epsilon_i^2/3$.

Since $\alpha_i\le 1/p_i(1-\lambda_i)$ and $\gamma_i\le 1/(1-p_i)$ we have $m\log \tau=\tilde O(m\log\alpha)$. Since $1/\mu\le \min 3/(p_i\epsilon_i^2)\le m^2$ we get $T=\tilde O(|S|\cdot Nm\cdot (m\log \tau+\log (1/\mu))=\tilde O(|S|\cdot Nm^2\log \alpha)$.
\end{proof}

\subsection{Combining with the Method of Conditional Probabilities}
In the above constructions, the time complexity is linear in $|S|$ which may be exponentially large.
In the following we get around this problem when $S$ is of the form $S_1\times \cdots\times S_n$
and the expectation of some ``intermediate'' random variables can be efficiently computed.

We prove

\begin{theorem}\label{Theo5} Let all notation and assumptions be
as in Theorem~\ref{Theorem1} and Corollary~\ref{Coro3}. Suppose $S= S_1\times S_2\times\cdots\times S_n$.
If any expectation of the form
$$\E\left[\left.{X_i(x_1,\ldots,x_n)}\ \right|\ x_1=\xi_1,\ldots,x_j=\xi_j\right]$$
can be computed in bit-time $T$ then the constructions in Theorem~\ref{Theorem1} and \ref{Theorem2}
can be performed in bit-time
$$\tilde O(T(|S_1|+\cdots+|S_n|)\cdot Nm(m\log \tau+\log (1/\mu)))$$
and in Corollary~\ref{Coro3} in bit-time
$$\tilde O(T(|S_1|+\cdots+|S_n|)\cdot Nm^2\log \alpha)$$
\end{theorem}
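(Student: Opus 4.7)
The plan is to replace the exhaustive search over $S$ in the algorithm of Theorem~\ref{Theorem1} with a greedy coordinate-by-coordinate choice, using the product structure $S = S_1 \times \cdots \times S_n$ together with the method of conditional probabilities. The pessimistic estimator $P_\ell$ (resp.\ its adjusted version from Theorem~\ref{Theorem2}) already serves as a potential function; I only refine it so that partial assignments to the first few coordinates of $s_{\ell+1}$ have a well-defined estimator.

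Concretely, for a partial assignment $\xi_1,\ldots,\xi_j$ to the first $j$ coordinates of $s_{\ell+1}$, define
$$Q_\ell(\xi_1,\ldots,\xi_j) := \sum_{i=1}^N e^{-D(\lambda_i||p_i)(m+1)}\, \gamma_i^{\ell+1}\, \alpha_i^{Z_{\ell,i}}\cdot \E_{s\sim D}\!\left[\alpha_i^{X_i(s)}\,\middle|\, s_1=\xi_1,\ldots,s_j=\xi_j\right].$$
By the proof of Theorem~\ref{Theorem1}, $Q_\ell(\emptyset)=\E_{s\sim D}[P'_{\ell+1}(s)] \le P_\ell$. By the tower property, for any partial assignment we have $\E_{\xi_{j+1}}[Q_\ell(\xi_1,\ldots,\xi_{j+1})] = Q_\ell(\xi_1,\ldots,\xi_j)$, so some $\xi_{j+1}\in S_{j+1}$ achieves $Q_\ell(\xi_1,\ldots,\xi_{j+1})\le Q_\ell(\xi_1,\ldots,\xi_j)$. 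Picking such $\xi_{j+1}$ greedily for $j=0,1,\ldots,n-1$ yields $s_{\ell+1}=(\xi_1,\ldots,\xi_n)\in S$ with $P'_{\ell+1}(s_{\ell+1}) = Q_\ell(\xi_1,\ldots,\xi_n) \le P_\ell$, which is precisely the invariant the algorithm of Theorem~\ref{Theorem1} needs to push forward by one step. The rest of the proof (the telescoping bound on $P_m$ and the conclusion about $Z_{m,k}$) is identical.

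For the time bound, note that since $X_i\in\{0,1\}$ we have $\alpha_i^{X_i(s)}=1+(\alpha_i-1)X_i(s)$, so
$$\E\!\left[\alpha_i^{X_i(s)}\,\middle|\, s_1=\xi_1,\ldots,s_j=\xi_j\right] = 1+(\alpha_i-1)\,\E\!\left[X_i(s)\,\middle|\, s_1=\xi_1,\ldots,s_j=\xi_j\right],$$
and the right-hand side is computable in bit-time $T$ by hypothesis. For each of the $m+1$ rounds and each of the $n$ coordinate positions, we scan at most $|S_j|$ candidates, and for each candidate we evaluate a sum of $N$ such conditional expectations. This gives $O(Nm(|S_1|+\cdots+|S_n|))$ arithmetic calls, each of bit-cost $T$ for the oracle and $\tilde O(m\log\tau+\log(1/\mu))$ for the $\Delta$-bit arithmetic on the estimator, as already established in Theorem~\ref{Theorem2}. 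Multiplying gives the first stated bound, and substituting $m\log\tau+\log(1/\mu)=\tilde O(m\log\alpha)$ as in Corollary~\ref{Coro3} gives the second.

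The main obstacle is routine bookkeeping: one must check that the precision analysis of Theorem~\ref{Theorem2} carries over verbatim (it does, because the potential function is unchanged and only the selection rule for $s_{\ell+1}$ is replaced, so the cumulative rounding error still stays below $\mu/2$), and that the reduction from $\E[\alpha_i^{X_i}\mid\cdot]$ to the promised oracle for $\E[X_i\mid\cdot]$ costs only one call per term. No new ideas beyond the standard conditional-probabilities derandomization are required.
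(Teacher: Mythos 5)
Your construction is essentially the paper's own proof: fix the coordinates of $s_{\ell+1}$ one at a time by the method of conditional probabilities applied to the potential, using the product structure of $S$, and your explicit identity $\E\left[\alpha_i^{X_i(s)}\mid s_1=\xi_1,\ldots,s_j=\xi_j\right]=1+(\alpha_i-1)\E\left[X_i(s)\mid s_1=\xi_1,\ldots,s_j=\xi_j\right]$ is a nice way to make precise how the promised oracle is invoked (the paper leaves this implicit). The one point that is not right as stated is your parenthetical claim that the precision analysis of Theorem~\ref{Theorem2} carries over \emph{verbatim}. With absolute error $\delta$ per evaluation of the estimator, each of the $n$ coordinate-fixing comparisons inside a single round can lose $\Theta(\delta)$: you only know that some $\xi_{j+1}$ has \emph{exact} $Q_\ell(\xi_1,\ldots,\xi_{j+1})$ at most the \emph{exact} $Q_\ell(\xi_1,\ldots,\xi_j)$, so the candidate you actually select on the basis of approximate values may exceed the parent's value by $O(\delta)$, and these losses chain over the $n$ levels of the greedy descent. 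Hence the drift per round is $O(n\delta)$ rather than $O(\delta)$, and keeping $\delta=\mu/(4m)$ as in Theorem~\ref{Theorem2} allows a total drift of order $n\mu$ over the $m+1$ rounds, which destroys the invariant $P_{m+1}<1$ already for small $n$. This is exactly why the paper tightens the requirement to absolute error less than $\mu/(4mn)$; the fix costs only $O(\log n)$ additional bits of precision, which is swallowed by the $\tilde O$, so your final time bounds stand, but the error tolerance does have to change and your justification should be replaced by this adjustment.
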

\begin{proof} The first result follows from the fact that since
$$\E[P'_{\ell+1}(s)]=\E_{y_1\sim S_1}[\E[P'_{\ell+1}(s)\ |\ x_1=y_1]]\le 1,$$ there is
$\xi_1\in S_1$ such that $\E[P'_{\ell+1}(s)]\ |\ x_1=\xi_1]\le 1$. So we find such $\xi_1$.
Then recursively find $\xi_2,\ldots,\xi_n$.

For this case we need the absolute error to be less than $\mu/(4mn)$ (rather than $\mu/(4m)$ as in Theorem~\ref{Theorem2}). This adds a factor of $\log n$ to the time complexity that is swallowed by the $\tilde O$.
\end{proof}

\section{Constructions of Almost $k$-Wise Independent Sets}\label{LCC}

\subsection{$\epsilon$-Balance Error-Correcting Code}
A {\it linear code} over the field $\FF_q$ is a linear
subspace $C\subset \FF_q^m$. Elements in the code
are called {\it codewords}. A linear code $C$ is called
a $[m,k,d]_q$ {\it linear code} if $C\subset \FF_q^m$
is a linear code, $|C|=q^k$ and for every two distinct codewords $v$ and $u$
in the code we have $\dist(v,u):=|\{i\ |\ v_i\not=u_i\}|\ge d$.
The latter is equivalent to: For every nonzero codeword $w$ in $C$,
we have $\wt(w):=|\{i\ |\ w_i\not=0\}|\ge d$.

A linear code $C$ is called
a $[m,k,(1-1/q)m]_q$ {\it $\epsilon$-balance error-correcting linear code} if $C\subset \FF_q^m$
is a linear code, $|C|=q^k$ and for every nonzero codeword $w\in C$ and $\xi\in \FF_q$
we have
$$(1-\epsilon)\frac{m}{q}\le |\{w_i\ |\ w_i=\xi\}|\le (1+\epsilon) \frac{m}{q}.$$
We show

\begin{lemma}\label{LBC} \label{code2} Let $q$ be a prime power, $m$ and $k$ positive integers
and  $0\le \epsilon\le 1/2$. For
$$m\ge O\left(\frac{k q \log q}{\epsilon^2}\right)$$ there is an $[m,k,(1-1/q)m]_q$
$\epsilon$-balance error-correcting linear code
that can be constructed in bit-time complexity $\tilde O(q^{k+3}/\epsilon^4)$.
\end{lemma}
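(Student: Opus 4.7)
The plan is to view the generator matrix $G \in \FF_q^{k\times m}$ as an unknown sequence of $m$ columns and to build it column by column using the derandomization machinery of Corollary~\ref{Coro3} together with Theorem~\ref{Theo5}.

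I would take the sample space to be $S := \FF_q^k$ with the uniform distribution, so that each sample represents one column of $G$. A multiset $\{g_1,\ldots,g_m\} \subseteq S$ forms the columns of an $\epsilon$-balanced $[m,k,(1-1/q)m]_q$ code iff, for every nonzero $u \in \FF_q^k$ and every $\xi \in \FF_q$,
\[
(1-\epsilon)\frac{m}{q} \;\le\; \#\{\,i : \langle u, g_i\rangle = \xi\,\} \;\le\; (1+\epsilon)\frac{m}{q},
\]
because the nonzero codewords of the code generated by the $g_i$'s are precisely the vectors $(\langle u, g_1\rangle,\dots,\langle u, g_m\rangle)$ with $u \ne 0$. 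For each such pair $(u,\xi)$ I would define the indicator $X_{u,\xi}(g) = \bone[\langle u,g\rangle = \xi]$, which satisfies $\E_{g \sim U}[X_{u,\xi}] = 1/q$ whenever $u \ne 0$ because the linear functional $g \mapsto \langle u,g\rangle$ is surjective onto $\FF_q$ with equal-sized fibres.

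Next, I would apply Corollary~\ref{Coro3} with $N = 2q(q^k-1)$ variables (each pair $(u,\xi)$ appearing twice: once with $\lambda_i = (1-\epsilon)/q$ to enforce the lower bound and once with $\lambda_j = (1+\epsilon)/q$ to enforce the upper bound), taking $p_i = 1/q$ and $\epsilon_i = \epsilon$ throughout. This immediately gives the sample-size bound
\[
m \;=\; O\!\left(\frac{\ln N}{(1/q)\,\epsilon^{2}}\right) \;=\; O\!\left(\frac{kq\log q}{\epsilon^{2}}\right),
\]
matching the stated size.

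For the running time I would invoke the conditional-probability refinement of Theorem~\ref{Theo5} using the obvious product decomposition $S = \FF_q \times \cdots \times \FF_q$ ($k$ factors). The essential check is that each conditional expectation $\E[X_{u,\xi}(g) \mid g_1 = \xi_1,\dots,g_j = \xi_j]$ is trivial to compute: if $(u_{j+1},\dots,u_k) \ne 0$ then the remaining partial inner product is uniform on $\FF_q$ and the conditional expectation equals $1/q$, while otherwise the conditional expectation is simply $\bone[\sum_{l \le j} u_l \xi_l = \xi]$. Substituting $T = \tilde O(1)$, $\sum_j |S_j| = kq$, $N = O(q^{k+1})$, $m = \tilde O(q/\epsilon^{2})$, and $\log \alpha = O(\log q)$ into the bit-time expression of Theorem~\ref{Theo5} yields the claimed construction time. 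The whole lemma is a direct bookkeeping application of the derandomization toolkit on top of the textbook Chernoff-plus-union-bound proof that a random linear code is $\epsilon$-balanced, so there is no essential obstacle beyond writing the conditional-expectation formulas down.
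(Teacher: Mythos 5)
Your proposal follows essentially the same route as the paper: sample space $S=\FF_q^k$ under the uniform distribution, indicator variables $[\langle u,g\rangle=\xi]$ with mean $1/q$, Corollary~\ref{Coro3} for the size bound and Theorem~\ref{Theo5} with $S=\FF_q\times\cdots\times\FF_q$ for the running time (your explicit computation of the conditional expectations is correct, and is in fact more detailed than the paper, which only asserts that the hypothesis of Theorem~\ref{Theo5} holds), and finally the code is the set of evaluations $uG$ of the chosen columns. The one real discrepancy is in the time bound: you range over all nonzero dual vectors $u$, so $N=2q(q^k-1)=\Theta(q^{k+1})$, and plugging your own parameters into Theorem~\ref{Theo5} gives $\tilde O\bigl(kq\cdot q^{k+1}\cdot (kq\log q/\epsilon^2)^2\log q\bigr)=\tilde O(q^{k+4}/\epsilon^4)$, a factor $q$ above the claimed $\tilde O(q^{k+3}/\epsilon^4)$; your assertion that the substitution ``yields the claimed construction time'' does not check out. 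The paper avoids this by keeping only one representative per scalar class, namely the vectors $v=(v_1,\ldots,v_j,1,0,\ldots,0)$, so that $N=2q(q^k-1)/(q-1)=O(q^k)$; this suffices because $\langle cv,g\rangle=\xi$ iff $\langle v,g\rangle=c^{-1}\xi$, so enforcing balance for $v$ at \emph{every} $\xi\in\FF_q$ automatically enforces it for all nonzero scalar multiples of $v$. The size bound $m=O(kq\log q/\epsilon^2)$ is unaffected, since $N$ enters it only through $\ln N$; with this one adjustment your argument coincides with the paper's proof.
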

\begin{proof} We use Corollary~\ref{Coro3}.
Consider $S=\FF_q^k$ with the uniform distribution.
Define for every $v\in \FF_q^{k}$ of the form
$v=(v_1,\ldots,v_j,1,0,\ldots,0)$, $j= 0,1,\ldots,k-1$, every
$\xi\in\FF_q$ and every $t\in\{1,2\}$ a random variable $X_{v,\xi,t}:\FF_q^k\to \{0,1\}$
where
$X_{v,\xi,t}(w)=[v_1w_1+\cdots+v_k w_k=\xi]$. That is, $X_{v,\xi,t}(w)=1$ if $v_1w_1+\cdots+v_k w_k=\xi$
and zero otherwise. The
number of random variables is $N=2q(q^k-1)/(q-1)$ and
$\E[X_{v,\xi,t}]=1/q$ for all $v,\xi$ and $t$.
The random variables satisfy the condition in Theorem~\ref{Theo5} with $S_i=\FF_q$ for $i=1,\ldots,k$.
Therefore an $S'=\{s_1,\ldots,s_m\}\subseteq S$ of size
$$m\ge \frac{3q\ln N}{\epsilon^2}=O\left(\frac{k q \log q}{\epsilon^2}\right)$$
that satisfies $\E_{s\sim U_{S'}}[X_{v,\xi,1}(s)]\ge (1-\epsilon)/q$
and $\E_{s\sim U_{S'}}[X_{v,\xi,2}(s)]\le (1+\epsilon)/q$ for all $v$ and $\xi$,
can be constructed in bit-time complexity
$$\tilde O\left((kq)\left(2q\frac{q^{k}-1}{q-1}\right)\frac{k^2q^2\log^2 q}{\epsilon^4}\right)=\tilde O
\left(\frac{q^{k+3}}{\epsilon^4}\right).$$
Now, $C=\{(us_1,\ldots,us_m)|u\in \FF_q^k\}$ is the code.
\end{proof}

\subsection{$\epsilon$-Bias Sample Space}\label{ebss}
Let $D$ be a probability distribution over $\FF_2^n$.
The {\it bias}
of $D$ with respect to a set of indices $I\subseteq [n]$
is defined as
$$\bias_I(D)= \left|\underset{x\sim D}{\Pr} \left(\sum_{i\in I}x_i=0\right)-\underset{x\sim D}{\Pr}\left(
\sum_{i\in I}x_i=1\right)\right|.$$
We say that $D$ is {\it $\epsilon$-bias sample space} if $\bias_I(D)\le \epsilon$ for
all non-empty subset $I\subseteq [n]$.
If $D$ is the uniform distribution over a multiset $S\subseteq \FF_2^n$ then we call
$S$ an {\it $\epsilon$-bias set}. The goal is to construct a small $\epsilon$-bias set
in polynomial time in~$n/\epsilon$. The following constructions are known from
the literature

\begin{center}
\begin{tabular}{|l|c|}
Reference & Size $|S|=O(\cdot )$\\
\hline
Alon et. al. \cite{AGHP90} & $\frac{n^2}{\epsilon^2\log^2(n/\epsilon)}$\\
\hline
AGC+Hadamard code\footnote{See the construction in \cite{BS13}}
& $\frac{n}{\epsilon^3\log(1/\epsilon)}$\\
\hline
Ben-Aroya and Ta-Shma \cite{BS13} & $\frac{n^{5/4}}{\epsilon^{2.5}\log^{5/4}(1/\epsilon)}$\\
\hline
\end{tabular}
\end{center}

The best lower bound for the size of $\epsilon$-bias set is~\cite{AGHP90,A09}
$$\Omega\left(\frac{n}{\epsilon^2\log(1/\epsilon)}\right).$$

Let $C$ be an $\epsilon$-balance error-correcting linear code
$[m,n,m/2]_2$ over $\FF_2$ with a $m\times n$ generator matrix $A$.
It is easy to see that the set of rows of $A$ is $\epsilon$-bias set
of size $m$. Therefore, by Lemma~\ref{LBC}, for $q=2$, we have

\begin{lemma}\label{EBS}  An $\epsilon$-bias set $S\subseteq \FF_2^n$ of size
$$O\left(\frac{n}{\epsilon^2}\right)$$
can be constructed in time $O(2^n/\epsilon^4)$.
\end{lemma}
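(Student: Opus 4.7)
My plan is to carry out the reduction outlined in the paragraph preceding the lemma. First, I will invoke Lemma~\ref{LBC} with $q=2$ and $k=n$ to obtain an $[m,n,m/2]_2$ $\epsilon$-balance linear code $C$ with $m=O(n/\epsilon^2)$, constructed in bit-time $\tilde O(2^{n+3}/\epsilon^4)$, which matches the claimed $O(2^n/\epsilon^4)$ bound up to the poly-logarithmic factors absorbed in its statement. Let $A\in\FF_2^{m\times n}$ be the corresponding generator matrix, so that $C=\{Au : u\in \FF_2^n\}$, and take $S$ to be the multiset of rows of $A$; then $|S|=m=O(n/\epsilon^2)$, as required.

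Next I would translate the bias condition on $S$ into the balance condition on codewords of $C$. Fix a non-empty $I\subseteq[n]$ and let $u=\mathbf{1}_I\in\FF_2^n\setminus\{0\}$. Writing $A_j$ for the $j$-th row of $A$, the $j$-th coordinate of the codeword $w:=Au$ equals $\sum_{i\in I}(A_j)_i$. Since $|C|=2^n$, the matrix $A$ has rank $n$, so $w$ is a nonzero codeword. Consequently, for $x\sim U_S$ the random variable $\sum_{i\in I}x_i$ has the same distribution as a uniformly random coordinate of $w$.

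Finally I would invoke the $\epsilon$-balance property of $C$ applied to $w$: writing $N_\xi=|\{j:w_j=\xi\}|$ for $\xi\in\{0,1\}$, the definition of $\epsilon$-balance forces $N_0,N_1\in[(1-\epsilon)m/2,(1+\epsilon)m/2]$, and therefore
$$\bias_I(U_S)=\frac{|N_0-N_1|}{m}\le \epsilon.$$
Since $I$ was arbitrary, $S$ is an $\epsilon$-bias set of size $O(n/\epsilon^2)$ constructed in the required time.

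The argument is essentially mechanical once Lemma~\ref{LBC} is in hand; the only step worth a moment's thought is verifying that $A$ has rank $n$ (so that $Au\neq 0$ for every nonzero $u\in\FF_2^n$), which is immediate from $|C|=2^n$. I therefore do not anticipate any serious obstacle.
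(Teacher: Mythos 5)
Your proposal is correct and follows exactly the paper's route: apply Lemma~\ref{LBC} with $q=2$ and $k=n$ and take the rows of the generator matrix as the $\epsilon$-bias set; you merely spell out the "easy to see" step (nonzero $u$ gives a nonzero, hence $\epsilon$-balanced, codeword $Au$, so $\bias_I(U_S)=|N_0-N_1|/m\le\epsilon$), which the paper leaves implicit. No issues.
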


\noindent {\bf Remark}: Using the powering construction in \cite{AGHP90}
with $b_{ij}=(bin(v_jx^i),bin(y))$ where $\{y\}$ is an $\epsilon$-bias set $S'\subseteq \FF_2^m$
(rather than all the elements of $\FF_2^m$) gives a polynomial time
construction of an $\epsilon$-bias set $S\subseteq \FF_2^n$ of size $O(n/\epsilon^3)$.

\subsection{$k$-wise Approximating Distributions in Time $O(n^k)$}\label{ST}

The {\it distance in the $L_p$-norm between two probability distribution}
$D$ and $Q$ over the sample space $S$ is
$$\|D-Q\|_p=\left(\sum_{s\in S}|D(s)-Q(s)|^p\right)^{\frac{1}{p}}$$
for $p\in \Re$ and $\|D-Q\|_p=\max_{s\in S}|D(s)-Q(s)|$ for $p=\infty$.

Let $S=\Sigma^n$. A {\it uniform distribution $U$ over} $S$
is a distribution where $U(s)=1/|\Sigma|^n$ for all $s \in S$.
A {\it product distribution $D$
over} $S$ is a distribution where $D(s_1,\ldots,s_n)=p_{1,s_1}\cdots p_{n,s_n}$
where $0\le p_{i,s_i}\le 1$ for all $i\in [n]$ and $s_i\in\Sigma$.

For $I=(i_1,\ldots,i_k)$ where
$1\le i_1<\cdots<i_k\le n$ the {\it distribution $D_I$ restricted to} $I$
over $\Sigma^k$ is $D_I(\sigma_1,\ldots,\sigma_k)=\Pr_{s\sim D}[s_{i_1}=\sigma_1\wedge
\cdots\wedge s_{i_k}=\sigma_k]$. Two distributions $D$ and $Q$ over $\Sigma^n$ are called
{\it $k$-wise $\epsilon$-close in the $L_p$-norm}, if for any
$I=(i_1,\ldots,i_k)$, $\|D_I-Q_I\|_p\le \epsilon$. If $D$ is the uniform distribution then
$Q$ is called {\it $\epsilon$-almost $k$-wise independent in the $L_p$-norm}.

When $\epsilon=0$ then $S$ is called $k$-{\it wise independent set}. It is known that the size of any $k$-wise independent set is $n^{\Theta(k)}$, \cite{ABI,CGH}. See also \cite{AGM03}.

The goal is: given a distribution $D$. Construct $S'\subset S$ of small size such that the
uniform distribution on $S'$ is $k$-wise $\epsilon$-close to $D$ in the $L_p$-norm.
We will just say that $S'$ is $k${\it -wise $\epsilon$-close} to $D$ in the $L_p$-norm and if $D$ is the uniform distribution we say  that $S$ is {\it $\epsilon$-almost $k$-wise independent in the $L_p$-norm}.

For $\Sigma=\{0,1\}$, Naor and Naor proved
\begin{lemma}\label{NN} {\rm \cite{NN93}}. Let $k<n$ be an odd integer, $t$
is a power of $2$ and $$n\le 2^{\lfloor 2(m-1)/(k-1)\rfloor}-1.$$
Given an $\epsilon$-bias set $S\subseteq \{0,1\}^m$ of size $t$, one can,
in polynomial time, construct
a set $R\subseteq \{0,1\}^n$ of size $t$ that is $\epsilon$-almost $k$-wise independent in the
$L_\infty$-norm and $2^{k/2}\epsilon$-almost $k$-wise independent in
the $L_1$-norm.
\end{lemma}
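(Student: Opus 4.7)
The plan is to realise $R$ as the image of the $\epsilon$-bias set $S$ under a linear map $A\colon \FF_2^m \to \FF_2^n$ that is chosen so that \emph{any} $k$ rows of $A$ are linearly independent over $\FF_2$. Then for every $I = (i_1,\ldots,i_k)$ and every nonempty $J \subseteq I$, the character $s \mapsto (-1)^{\sum_{i\in J}(As)_i}$ equals the character $s \mapsto (-1)^{\langle w_J, s\rangle}$ where $w_J\in \FF_2^m$ is the (nonzero) $\FF_2$-sum of rows of $A$ indexed by $J$. The $\epsilon$-bias property then controls every nontrivial Fourier coefficient of the $I$-marginal of $U_R$, which yields both distance bounds by Fourier/Parseval arithmetic.

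First I would build $A$ explicitly, using the hypothesis $n\le 2^s-1$ for $s=\lfloor 2(m-1)/(k-1)\rfloor$. Pick $n$ distinct nonzero elements $\alpha_1,\ldots,\alpha_n\in \FF_{2^s}$ and fix an $\FF_2$-basis of $\FF_{2^s}$. Let row $i$ of $A$ be the concatenation
$$(1,\; \alpha_i,\; \alpha_i^{3},\; \alpha_i^{5},\;\ldots,\; \alpha_i^{k-2})$$
with each $\alpha_i^{2j+1}$ expanded into $s$ binary coordinates; this uses $1+s(k-1)/2\le m$ columns (pad the rest with zeros). To check that any $k$ rows $j_1,\ldots,j_k$ are $\FF_2$-independent, suppose $\sum_{r=1}^k c_r\cdot\mathrm{row}(j_r)=0$ with $c_r\in\FF_2$. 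Coordinate by coordinate this forces $\sum_r c_r\alpha_{j_r}^\ell=0$ in $\FF_{2^s}$ for every $\ell\in\{0,1,3,5,\ldots,k-2\}$. Since Frobenius $x\mapsto x^2$ is $\FF_2$-linear and $c_r^2=c_r$, squaring each identity extends the conclusion to every $\ell\in\{0,1,2,\ldots,k-1\}$ (because $k$ is odd, the odd part of any $\ell\le k-1$ lies in $\{1,3,\ldots,k-2\}$). This is a $k\times k$ Vandermonde system in the distinct $\alpha_{j_r}$, so $c_r=0$ for all $r$.

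With $A$ in hand, set $R=\{As\bmod 2:s\in S\}$, a multiset of size $t$. Fix $I$ with $|I|=k$ and let $Q=(U_R)_I$. For $J\subseteq\{1,\ldots,k\}$ nonempty, the independence of the $k$ chosen rows of $A$ gives $w_J\ne 0$, so
$$|\hat Q(J)|=\bigl|\underset{s\sim U_S}{\E}(-1)^{\langle w_J,s\rangle}\bigr|=\bias_{w_J}(U_S)\le \epsilon.$$
Writing $Q(\sigma)-2^{-k}=2^{-k}\sum_{J\neq\emptyset}\hat Q(J)(-1)^{\langle J,\sigma\rangle}$ and applying Parseval, $\sum_\sigma(Q(\sigma)-2^{-k})^2=2^{-k}\sum_{J\neq\emptyset}\hat Q(J)^2\le \epsilon^2$, so $\|Q-U_I\|_\infty\le\|Q-U_I\|_2\le \epsilon$ and Cauchy--Schwarz gives $\|Q-U_I\|_1\le 2^{k/2}\|Q-U_I\|_2\le 2^{k/2}\epsilon$. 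The polynomial-time claim is immediate since $\FF_{2^s}$ arithmetic, the matrix $A$, and the products $As$ all cost $\mathrm{poly}(n,m)$. The one nontrivial ingredient is the BCH-style construction of $A$, but the Frobenius/Vandermonde argument above makes it essentially mechanical.
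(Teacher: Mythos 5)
Your proposal is correct: the BCH-style matrix (rows $(1,\alpha_i,\alpha_i^3,\ldots,\alpha_i^{k-2})$ over $\FF_{2^s}$, any $k$ rows independent via the Frobenius/Vandermonde argument, which is where oddness of $k$ and the bound $n\le 2^{\lfloor 2(m-1)/(k-1)\rfloor}-1$ enter) composed with the $\epsilon$-bias set, followed by the Parseval/Cauchy--Schwarz step, is exactly the standard argument behind this lemma. The paper itself gives no proof---it simply cites \cite{NN93}---and your reconstruction matches that cited construction, including the $L_\infty$ bound $\epsilon$ and the $L_1$ bound $2^{k/2}\epsilon$.
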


For $\Sigma=\{0,1\}$, the following are the best known polynomial time constructions
of sets that are $\epsilon$-almost $k$-wise independent in the $L_\infty$-norm and $L_1$-norm.
The constructions use the $\epsilon$-bias sets in Section~\ref{ebss}
with Lemma~\ref{NN}. See also the sizes without the small terms $k$, $\log(1/\epsilon)$ and $\log\log n$ in the first three rows of the table in the introduction.

\begin{center}
\begin{tabular}{|c|c|c|}
Reference & Size for $L_\infty$ & Size for $L_1$\\
\hline
\cite{AGHP90} & $\frac{k^2\log^2n}{\epsilon^2(\log^2k+(\log\log n)^2+\log^2(1/\epsilon)))}$
& $\frac{k^2 2^k\log^2n}{\epsilon^2(k^2+(\log\log n)^2+\log^2(1/\epsilon))}$\\
\hline
AGC+Hadamard& $\frac{k\log n}{\epsilon^3\log(1/\epsilon)}$
&$\frac{k 2^{3k/2}\log n}{\epsilon^3(k+\log(1/\epsilon))}$\\
\hline
\cite{BS13} & $\frac{k^{5/4}\log^{5/4} n}{\epsilon^{2.5}\log^{5/4}(1/\epsilon)}$
&$\frac{k^{5/4}2^{5k/4}\log^{5/4} n}{\epsilon^{2.5}(k^{5/4}+\log^{5/4}(1/\epsilon))}$\\
\hline
\end{tabular}
\end{center}

By Lemma \ref{EBS} and Lemma~\ref{NN} we have
\begin{lemma}\label{Li5} An $\epsilon$-almost $k$-wise independent set in the
$L_\infty$-norm of size
$$O\left(\frac{k\log n}{\epsilon^2}\right)$$ can be constructed in time $O(n^{(k-1)/2}/\epsilon^4)$.

An $\epsilon$-almost $k$-wise independent in
the $L_1$-norm of size
$$O\left(\frac{k2^k\log n}{\epsilon^2}\right)$$
can be constructed in time $O(2^{2k}n^{(k-1)/2}/\epsilon^4)$.
\end{lemma}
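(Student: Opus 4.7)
The plan is to combine Lemma~\ref{EBS} (our new $\epsilon$-bias construction of size $O(n/\epsilon^2)$ in time $O(2^n/\epsilon^4)$) with the Naor--Naor reduction of Lemma~\ref{NN} in a black-box manner, tuning parameters to give each of the two bounds.

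First, for the $L_\infty$ statement: pick the smallest $m$ with
$$n \le 2^{\lfloor 2(m-1)/(k-1)\rfloor}-1,$$
which gives $m=O(k\log n)$. Apply Lemma~\ref{EBS} in dimension $m$ to construct an $\epsilon$-bias set $S\subseteq\{0,1\}^m$ of size $O(m/\epsilon^2)=O(k\log n/\epsilon^2)$ in time $O(2^m/\epsilon^4)=O(n^{(k-1)/2}/\epsilon^4)$. Feeding $S$ into Lemma~\ref{NN} yields, in polynomial time (which is absorbed), a set $R\subseteq\{0,1\}^n$ of the same size $O(k\log n/\epsilon^2)$ that is $\epsilon$-almost $k$-wise independent in the $L_\infty$-norm. (A trivial adjustment is needed when $k$ is even: pad by one coordinate and then apply Lemma~\ref{NN} with $k+1$, which only affects the hidden constants.)

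For the $L_1$ statement: Lemma~\ref{NN} promotes an $\epsilon'$-bias set to a $2^{k/2}\epsilon'$-almost $k$-wise independent set in the $L_1$-norm. So I plug in $\epsilon':=\epsilon/2^{k/2}$ and invoke the $L_\infty$ construction above with this smaller bias. The resulting set has size
$$O\!\left(\frac{k\log n}{(\epsilon/2^{k/2})^2}\right)=O\!\left(\frac{k\,2^k\log n}{\epsilon^2}\right),$$
and the construction time becomes
$$O\!\left(\frac{n^{(k-1)/2}}{(\epsilon/2^{k/2})^4}\right)=O\!\left(\frac{2^{2k}\,n^{(k-1)/2}}{\epsilon^4}\right),$$
which matches the claimed bound.

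There is no real obstacle here: the derandomized Chernoff-with-union-bound of Section~\ref{Dera} already did the heavy lifting inside Lemma~\ref{EBS}, and Lemma~\ref{NN} is applied as a black box. The only care needed is in the parameter choice $m=\Theta(k\log n)$ (to make the Naor--Naor hypothesis hold with the smallest possible ambient dimension so that $2^m=n^{(k-1)/2}$), and in rescaling $\epsilon\mapsto\epsilon/2^{k/2}$ for the $L_1$ bound so that the $2^{k/2}$ blow-up in Lemma~\ref{NN} is cancelled.
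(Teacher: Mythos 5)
Your proposal is correct and follows exactly the paper's route: the paper derives Lemma~\ref{Li5} simply by citing Lemma~\ref{EBS} together with the Naor--Naor reduction of Lemma~\ref{NN}, and your parameter choices ($m=\Theta(k\log n)$ so that $2^m\approx n^{(k-1)/2}$, and $\epsilon\mapsto\epsilon/2^{k/2}$ for the $L_1$ bound) are precisely the computation left implicit there. The only additional detail you supply (handling even $k$, which the paper glosses over) is a harmless adjustment absorbed in the $O(\cdot)$ bounds.
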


We now prove
\begin{theorem}\label{L6} Let $\epsilon<1/2^k$. An $\epsilon$-almost $k$-wise independent set in the
$L_\infty$-norm of size
$$O\left(\frac{k\log n}{2^k\epsilon^2}\right)$$ can be constructed in time $\tilde O(n^k/\epsilon^4)$.
\end{theorem}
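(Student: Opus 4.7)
The plan is to apply Theorem~\ref{Theo5} directly to the product space $S=\{0,1\}^n$. For every $k$-tuple of indices $I=(i_1<\cdots<i_k)$ and every pattern $\sigma\in\{0,1\}^k$ I introduce two indicator random variables $X_{I,\sigma,t}(s)=\bone[s_{i_1}=\sigma_1\wedge\cdots\wedge s_{i_k}=\sigma_k]$, one ($t=1$) to enforce a lower bound and one ($t=2$) to enforce an upper bound on $\E_{s\sim U_{S'}}[X_{I,\sigma,t}]$. Under the uniform distribution each such variable has mean $p=1/2^k$, and I target $\lambda^{\pm}=(1\pm\epsilon')p$ with relative error $\epsilon'=2^k\epsilon$, which lies in $(0,1)$ by the hypothesis $\epsilon<1/2^k$. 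The total number of variables is $N=2\binom{n}{k}2^k=O(n^k 2^k)$, so Corollary~\ref{Coro3} yields a construction of size
$$m \ge \frac{3\ln N}{p\,(\epsilon')^2} = \frac{3\ln(2\binom{n}{k}2^k)}{(1/2^k)(2^k\epsilon)^2} = O\left(\frac{k\log n}{2^k\epsilon^2}\right),$$
matching the size claimed in the theorem.

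For the bit-time I would use Theorem~\ref{Theo5} together with one refinement. The conditional expectation $\E[X_{I,\sigma,t}(s)\mid x_1=\xi_1,\ldots,x_j=\xi_j]$ equals $0$ if $\sigma$ is already inconsistent with $\xi$ on $I\cap[j]$ and equals $2^{-(k-|I\cap[j]|)}$ otherwise, so it is computable and updatable in $O(k)$ bit-operations, giving $T=O(k)$. A naive invocation of Theorem~\ref{Theo5} then yields $\tilde O(T(|S_1|+\cdots+|S_n|)Nm^2\log\alpha)=\tilde O(n^{k+1}/(2^k\epsilon^4))$, which is a factor of $n$ too large. The refinement is that each $X_{I,\sigma,t}$ depends on only $k$ of the $n$ coordinates, so at step $j$ of the conditional-probability descent only the $O(n^{k-1}2^k)$ variables with $j\in I$ have their potential term change, while the remaining terms are kept as they are. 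Maintaining the potential $P_\ell$ incrementally therefore costs $\sum_{j=1}^n O(n^{k-1}2^k)=O(n^k 2^k)$ arithmetic operations per sample; over $m$ samples and with $\tilde O(m)$-bit precision per arithmetic operation (as in the proof of Theorem~\ref{Theorem2}), the total bit-time is $\tilde O(m^2 n^k 2^k)=\tilde O(n^k/(2^k\epsilon^4))\subseteq\tilde O(n^k/\epsilon^4)$.

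The main obstacle is the bookkeeping for the incremental update: each of the $N$ terms of the potential must be addressable so that the $O(n^{k-1}2^k)$ terms affected at descent step $j$ can be located and multiplied by the appropriate factor (or reset to $1$) in amortized constant time. Once that is in place, the numerical-precision analysis of Theorem~\ref{Theorem2} carries over: the cumulative rounding error after at most $mn$ potential updates stays below $\mu/2$ with $\Delta=\tilde O(m)$ bits of precision, so the pessimistic inequality $P_{m+1}<1$ still holds and the conclusion of Corollary~\ref{Coro3} produces the desired multiset $S'\subseteq\{0,1\}^n$.
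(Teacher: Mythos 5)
Your construction is exactly the paper's: the same indicator variables $X_{I,\sigma}$ over $S=\{0,1\}^n$ with mean $p=1/2^k$ and relative error $\epsilon'=2^k\epsilon<1$, fed into Corollary~\ref{Coro3} and Theorem~\ref{Theo5}, and your size bound $O(k\log n/(2^k\epsilon^2))$ matches the paper's. The only difference is in the time accounting: the paper merely cites Theorem~\ref{Theo5}, while you correctly observe that the naive invocation carries an extra factor of $n$ and repair it with the sound remark that each $X_{I,\sigma}$ depends on only $k$ coordinates, so the potential can be maintained incrementally with only polylogarithmic extra precision --- a more careful justification of the claimed $\tilde O(n^k/\epsilon^4)$ bound than the paper itself provides.
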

\begin{proof} Consider $S=\{0,1\}^n$  with the uniform distribution.
Define for every $1\le i_1<i_2<\cdots<i_k\le n$
and $(\xi_1,\ldots,\xi_k)\in\{0,1\}^k$ the random variable
$X_{i_1,\ldots,i_k,\xi_1,\ldots,\xi_k}(s)$ that is equal to $1$ if and only if
$s_{i_j}=\xi_j$ for all $j=1,\ldots,k$. Now the result follows from Corollary~\ref{Coro3} and Theorem~\ref{Theo5}.
\end{proof}
This gives an $\epsilon$-almost $k$-wise independent set in the $L_1$-norm
of size $O(k2^k\log n/\epsilon^2)$.
We now give a better bound. We first prove

\begin{lemma} \label{L07} Let $0\le r<k$. An $\epsilon$-almost $k$-wise independent set in the
$L_1$-norm of size
$$m=O\left(\frac{2^k+k 2^r\log n}{\epsilon^2}\right)$$ can be constructed in time $\tilde O(2^{2^{k-r}}n^k/\epsilon^4)$.
\end{lemma}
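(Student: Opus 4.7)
The plan is to apply Corollary~\ref{Coro3} together with Theorem~\ref{Theo5} to a richer family of indicator events than in Theorem~\ref{L6}: rather than one event per pattern $\sigma\in\{0,1\}^k$, I introduce one event per triple $(I,T,\sigma'')$, where $I=(i_1,\ldots,i_k)$ with $i_1<\cdots<i_k$, $T\subseteq\{0,1\}^{k-r}$ is a subset of patterns on the first $k-r$ coordinates of $I$, and $\sigma''\in\{0,1\}^r$ is a pattern on the last $r$ coordinates. Concretely, for $s\in\{0,1\}^n$ I set
$$X_{I,T,\sigma''}(s)\;=\;\bfone\bigl[(s_{i_1},\ldots,s_{i_{k-r}})\in T\ \wedge\ (s_{i_{k-r+1}},\ldots,s_{i_k})=\sigma''\bigr],$$
so that $p_{I,T,\sigma''}:=\E_U[X_{I,T,\sigma''}]=|T|/2^k$ and the total number of events is $N=O(n^k\cdot 2^{2^{k-r}+r})$.

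Next I show that controlling an additive deviation $\eta:=\epsilon/2^{r+1}$ on each $X_{I,T,\sigma''}$ is enough to guarantee $L_1$-closeness. Fix a $k$-tuple $I$, let $D_I$ be the distribution on $\{0,1\}^k$ induced by $U_{S'}$, and decompose each $\sigma\in\{0,1\}^k$ as $(\sigma',\sigma'')$ with $\sigma'\in\{0,1\}^{k-r}$. For each $\sigma''$ define $T^+_{\sigma''}=\{\sigma':D_I(\sigma',\sigma'')>2^{-k}\}$ and $T^-_{\sigma''}$ its complement in $\{0,1\}^{k-r}$; a direct computation gives
$$\|D_I-U\|_1\;=\;\sum_{\sigma''}\bigl[\,(\E_{U_{S'}}[X_{I,T^+_{\sigma''},\sigma''}]-|T^+_{\sigma''}|/2^k)\;-\;(\E_{U_{S'}}[X_{I,T^-_{\sigma''},\sigma''}]-|T^-_{\sigma''}|/2^k)\,\bigr],$$
whose absolute value is at most $2\cdot 2^r\cdot\eta=\epsilon$.

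With this reduction I invoke Corollary~\ref{Coro3} using relative deviations $\epsilon_i=\eta/p_i$. Since $p_i\le 2^{-r}$ (the maximum being attained when $|T|=2^{k-r}$), the worst-case quantity is $\min_i p_i\epsilon_i^2=\eta^2/\max_i p_i=\eta^2\cdot 2^r$, and $\ln N=O(k\log n+2^{k-r})$, which yields
$$m\;=\;O\!\left(\frac{\ln N}{\eta^2\cdot 2^r}\right)\;=\;O\!\left(\frac{2^k+k 2^r\log n}{\epsilon^2}\right),$$
matching the claim. For the bit-time, each conditional expectation $\E[X_{I,T,\sigma''}\mid x_1=\xi_1,\ldots,x_j=\xi_j]$ reduces to checking which indices of $I$ are fixed and returning the appropriate product/count, computable in $\tilde O(1)$; Theorem~\ref{Theo5} applied with $S_i=\{0,1\}$ then gives total bit-time $\tilde O(2^{2^{k-r}}n^k/\epsilon^4)$ after absorbing the standard $k$, $\log n$ and $\log(1/\epsilon)$ factors.

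The main obstacle is the accounting in Corollary~\ref{Coro3}: the savings hinge on the fact that the largest events $|T|=2^{k-r}$ have probability $2^{-r}$ rather than $2^{-k}$, so that $D(\lambda_i\|p_i)\ge\eta^2/(3p_i)$ is larger by a factor of $2^r$ than a naive Pinsker-type bound $2\eta^2$; this extra $2^r$ is precisely what compensates the $2^{2r}$ blow-up coming from $1/\eta^2$. A mild implicit assumption $\epsilon<2^{r+1-k}$ is needed so that $\epsilon_i<1$ even for the smallest events $|T|=1$; larger $\epsilon$ can be handled by lowering $\epsilon$ to this threshold, which only strengthens the guarantee.
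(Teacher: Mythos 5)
Your proposal is correct and follows essentially the same route as the paper's proof: the same family of events (a fixed pattern on $r$ coordinates times an arbitrary subset of patterns on the remaining $k-r$ coordinates), the same additive deviation $\epsilon/2^{r+1}$ reduced to relative deviations for Corollary~\ref{Coro3}, the same positive/negative-split argument for the $L_1$ bound, and the same invocation of Theorem~\ref{Theo5} for the running time. Your remark that one needs $\epsilon_i<1$ (hence effectively $\epsilon<2^{r+1-k}$, or lowering $\epsilon$ to that threshold) is a small technicality the paper glosses over, and it is handled correctly.
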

\begin{proof} Consider $S=\{0,1\}^n$  with the uniform distribution.
For every $a\in \{0,1\}^r$ and $B\subseteq \{0,1\}^{k-r}$
we define the random variable
$Z_{i_1,\ldots,i_k,a,B}(s)=1$ if and only if $(s_{i_1},\ldots,s_{i_k})\in \{a\}\times B$.
Let $S'\subset S$ and suppose for every $I=(i_1,\ldots,i_k)$, $a\in \{0,1\}^r$ and $B\subseteq \{0,1\}^{k-r}$
we have $$|\E_{s\sim U_{S'}}[Z_{I,a,B}]-\E_{s\sim U_{S}}[Z_{I,a,B}]|=\left|\E_{s\sim U_{S'}}[Z_{I,a,B}]-\frac{|B|}{2^k}\right|\le \frac{\epsilon}{2^{r+1}}.$$
Then (here $\E$ is $\E_{s\sim U_{S'}}$)
\begin{eqnarray*}
\sum_{a\in \{0,1\}^r,b\in \{0,1\}^{k-r}}\left|\E[Z_{I,a,\{b\}}]-\frac{1}{2^k}\right|
&=&\sum_{a\in \{0,1\}^r}\sum_{b\in \{0,1\}^{k-r}}\left|\E[Z_{I,a,\{b\}}]-\frac{1}{2^k}\right|\\
&=&\sum_{a\in \{0,1\}^r}\max_{B\subseteq \{0,1\}^{k-r}} \left(\E[Z_{I,a,B}]-\frac{|B|}{2^k}\right)+\left(\frac{|\bar B|}{2^k}-\E[Z_{I,a,\bar B}]\right)\\
&\le& \sum_{a\in \{0,1\}^r}\frac{\epsilon}{2^r}=\epsilon,
\end{eqnarray*}
and therefore, $S'$ is an $\epsilon$-almost $k$-wise
independent in the $L_1$ norm.

Now to construct such a set $S'$ we use Corollary~\ref{Coro3} and Theorem~\ref{Theo5}.
We have $N=2^{2^{k-r}+1}2^r{n\choose k}$ and define for each $Z_{I,a,B}$,
$\epsilon_{I,a,B}=\epsilon 2^{k-r-1}/|B|$ and $p_{I,a,B}=|B|/2^k$.
By Theorem~\ref{Theorem1},
$$m\ge \frac{\ln 2^{2^{k-r}+1}2^r{n\choose k}}{\min p_{I,a,B}\epsilon_{I,a,B}^2}=O\left(\frac{2^k+k2^r\log n}{\epsilon^2}\right).$$
\end{proof}

We now prove
\begin{theorem} \label{L07E} Let $d>0$ be any real number.
An $\epsilon$-almost $k$-wise independent set in the
$L_1$-norm of size
$$m=O\left(\frac{2^k+(k/d)2^k+k \log n}{\epsilon^2}\right)$$ can be constructed in time $\tilde O(n^{k+d}/\epsilon^4)$.

In particular, an $\epsilon$-almost $k$-wise independent set in the
$L_1$-norm of size
$$m=O\left(\frac{2^k+k \log n}{\epsilon^2}\right)$$ can be constructed in time $\tilde O(n^{2k}/\epsilon^4)$.
\end{theorem}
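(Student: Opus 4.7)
The plan is to deduce Theorem~\ref{L07E} directly from Lemma~\ref{L07} by optimizing the free parameter $r$ to trade off the size term $k 2^r \log n$ against the time blowup $2^{2^{k-r}}$. Informally, I want $2^{2^{k-r}}$ to match $n^d$, which means choosing $k-r \approx \log_2(d\log n)$. So I would set
\[
r \;=\; \max\!\bigl(0,\ k - \lfloor \log_2(d\log n)\rfloor\bigr).
\]

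Case 1: $d\log n \le 2^k$, so $r = k - \lfloor \log_2(d\log n)\rfloor \ge 0$. Then $2^{k-r}\le d\log n$, hence $2^{2^{k-r}} \le 2^{d\log n} = n^d$, and the time bound from Lemma~\ref{L07} is $\tilde O(n^d \cdot n^k/\epsilon^4) = \tilde O(n^{k+d}/\epsilon^4)$. On the size side, $2^r \le 2\cdot 2^k/(d\log n)$, so $k 2^r \log n \le 2(k/d)2^k$, and the size becomes $O((2^k + (k/d)2^k)/\epsilon^2)$, which is within the claimed bound.

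Case 2: $d\log n > 2^k$, so we fall back to $r=0$. Lemma~\ref{L07} then gives size $O((2^k + k\log n)/\epsilon^2)$ in time $\tilde O(2^{2^k} n^k/\epsilon^4)$. Since $2^{2^k} < 2^{d\log n} = n^d$, this is $\tilde O(n^{k+d}/\epsilon^4)$. The size is trivially within $O((2^k + (k/d)2^k + k\log n)/\epsilon^2)$.

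Combining both cases proves the first statement. For the ``in particular'' clause I would simply plug $d = k$ into the general bound: then $(k/d)2^k = 2^k$ and $n^{k+d} = n^{2k}$, so the size collapses to $O((2^k + k\log n)/\epsilon^2)$ and the time to $\tilde O(n^{2k}/\epsilon^4)$. The only mild obstacle is the boundary case where the ``natural'' choice of $r$ would be negative; this is handled by clamping $r$ to $0$ and checking that both bounds still go through, which works because in that regime $2^{2^k}$ is already $\le n^d$.
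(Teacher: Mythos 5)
Your proof is correct and follows essentially the same route as the paper: the paper's proof is a one-liner applying Lemma~\ref{L07} with $r=\max(\lceil k-\log\log n-\log d\rceil,0)$, which is the same balancing choice as your $r=\max(0,k-\lfloor\log_2(d\log n)\rfloor)$ since $\log_2(d\log n)=\log d+\log\log n$. Your case analysis and the specialization $d=k$ for the ``in particular'' clause match what the paper leaves implicit.
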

\begin{proof} Follows from Lemma~\ref{L07} with $r=\max(\lceil k-\log\log n-\log d\rceil,0)$.
\end{proof}

%

The above results can be extended to any product distribution over any alphabet.

\subsection{Efficient Construction for Any $k$}\label{Theorems}
In this subsection, we give a construction that is efficient for
any $k$. We will give the results for the uniform distribution.
Similar results can be obtained for the product distribution.

We first define the dense perfect hash family.
We say that $H\subseteq [q]^n$ is a $(1-\epsilon)$-{\it dense} $(n,q,k)$-{\it perfect hash family}
if for every
$1\le i_1<i_2<\cdots<i_k\le n$ there are at least
$(1-\epsilon)|H|$ elements $h\in H$ such that $h_{i_1},\ldots,h_{i_k}$ are distinct.

We will use the following lemma. The proof is in~\cite{B14b} for a power of prime $q$. See also \cite{PR11}. Here we give the proof for any $q$.

\begin{lemma}~\label{Den} If $n>q>4k^2/\epsilon$ then
there is a $(1-\epsilon)$-dense $(n,q,k)$-PHF
of size
$$O\left(\frac{k^2\log n}{\epsilon\log(\epsilon q/k^2)}\right)$$
that can be constructed in polynomial time.
\end{lemma}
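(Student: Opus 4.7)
The plan is to apply the derandomization of Theorem~\ref{Theorem1} (in the polynomial-time form of Theorem~\ref{Theo5}) to \emph{pairwise} collision indicators. For each pair $1\le i<j\le n$, define $Y_{ij}:[q]^n\to\{0,1\}$ by $Y_{ij}(h)=[h_i=h_j]$. Under the uniform distribution on $S=[q]^n$, $p_{ij}:=\E[Y_{ij}]=1/q$, and the number of random variables is $N=\binom{n}{2}$. The key observation that makes pairs suffice is the pointwise bound: for every $k$-subset $I=\{i_1,\ldots,i_k\}$, the collision indicator
$$X_I(h):=[h_{i_1},\ldots,h_{i_k}\text{ not all distinct}]$$
satisfies $X_I(h)\le \sum_{a<b}Y_{i_ai_b}(h)$, so controlling every pairwise empirical mean automatically controls every $k$-wise collision frequency.

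The next step is to set the target $\lambda_{ij}=\epsilon/\binom{k}{2}$ for every pair and put all $N$ variables in the upper-bound group (i.e.\ $N'=0$ in Theorem~\ref{Theorem1}). The hypothesis $q>4k^2/\epsilon$ gives $\lambda_{ij}/p_{ij}=\epsilon q/\binom{k}{2}>8>e^2$, so $\lambda_{ij}>p_{ij}$ and we sit well inside the Chernoff upper-tail regime. Using the standard large-deviation estimate $D(\lambda\|p)\ge \lambda\ln(\lambda/(ep))$, one obtains
$$D(\lambda_{ij}\|p_{ij})\ge \frac{\epsilon}{\binom{k}{2}}\ln\!\left(\frac{\epsilon q}{e\binom{k}{2}}\right)=\Omega\!\left(\frac{\epsilon}{k^2}\log\frac{\epsilon q}{k^2}\right).$$
Plugging this into the bound $m\ge \ln N/\min_{i<j}D(\lambda_{ij}\|p_{ij})$ from Theorem~\ref{Theorem1} yields
$$m=O\!\left(\frac{k^2\log n}{\epsilon\log(\epsilon q/k^2)}\right),$$
and Theorem~\ref{Theorem1} produces $H=\{h_1,\ldots,h_m\}\subseteq[q]^n$ with $\E_{h\sim U_H}[Y_{ij}]\le \lambda_{ij}$ for every pair. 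Combined with the pointwise bound, $\E_{h\sim U_H}[X_I]\le \binom{k}{2}\lambda_{ij}=\epsilon$ for every $k$-subset $I$, so at least $(1-\epsilon)|H|$ functions in $H$ are injective on $I$; hence $H$ is a $(1-\epsilon)$-dense $(n,q,k)$-PHF.

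For the polynomial-time realisation I would invoke Theorem~\ref{Theo5} with the product decomposition $S=[q]\times\cdots\times[q]$. The relevant conditional expectation is trivial:
$$\E\!\left[\,Y_{ij}(h)\,\middle|\,h_1=\xi_1,\ldots,h_t=\xi_t\,\right]=\begin{cases}[\xi_i=\xi_j], & i,j\le t,\\ 1/q, & \text{otherwise,}\end{cases}$$
computable in $O(1)$ operations, so by Theorem~\ref{Theo5} the total bit-time is $\tilde O(nq\cdot Nm\cdot(m\log\tau+\log(1/\mu)))$, polynomial in $n,q,k,1/\epsilon$. The main obstacle is conceptual rather than technical: Corollary~\ref{Coro3}, which uses the quadratic lower bound $D(\lambda\|p)\ge p\epsilon^2/3$, is too weak here because $\lambda/p$ is far larger than $1$; to recover the $\log(\epsilon q/k^2)$ in the denominator one must return to Theorem~\ref{Theorem1} directly and use the Poisson-type KL estimate $\lambda\ln(\lambda/(ep))$.
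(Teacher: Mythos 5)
Your proposal is correct and follows essentially the same route as the paper: pairwise collision indicators $[h_i=h_j]$ with mean $1/q$, an upper-tail target $\lambda\approx\epsilon/k^2$, the Kullback-Leibler lower bound $D(\lambda\|p)\ge\lambda\ln(\lambda/(ep))$ (which is exactly the paper's explicit computation with $\lambda=1/\lceil k^2/\epsilon\rceil$), and a union bound over the $\binom{k}{2}$ pairs inside each $k$-subset. Your explicit appeal to Theorem~\ref{Theo5} with the easily computable conditional expectations is, if anything, slightly more careful than the paper, which cites Theorem~\ref{Theorem1} for the polynomial-time claim.
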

\begin{proof} We use Theorem~\ref{Theorem1}. The sample space is $S=[q]^n$
with the uniform distribution. The random variables are $X_{i,j}(s)=I[s_i=s_j]$ for all $1\le i<j\le n$.
That is, $X_{i,j}(s)=1$ if $s_i=s_j$ and $X_{i,j}(s)=0$ otherwise. We have
$p=\E[X_{i,j}]=1/q$. The number of such random variables is $N={n\choose 2}$.
Let $h=\lceil k^2/\epsilon\rceil$ and $\lambda=1/h>p$. Then
\begin{eqnarray*}
D(\lambda||p)&=&\frac{1}{h}\ln\left(\frac{1/h}{1/q}\right)+\left(1-\frac{1}{h}\right)\ln\frac{1-1/h}{1-1/q}\\
&\ge&\frac{1}{h}\ln q-\frac{1}{h}\ln h+\left(1-\frac{1}{h}\right)\ln\left(1-\frac{1}{h}\right)\\
&\ge&\frac{\ln q-\ln h-1}{h}.
\end{eqnarray*}
By Theorem~\ref{Theorem1}, in polynomial time, we can find a multiset $S'=\{s_1,\ldots,s_m\}\subset S$ where
$$m=\frac{\ln N}{D(\lambda||p)}=O\left(\frac{k^2\log n}{\epsilon\log(\epsilon q/k^2)}\right)$$
such that for all $1\le i\le j\le n$ we have $\Pr_{s\in U_{S'}}[s_{i}=s_{j}]\le 1/h$.
Then for any $1\le i_1<i_2<\cdots<i_k\le n$ we have
\begin{eqnarray*}
\Pr_{s\in U_{S'}}[ (\exists 1\le j_1<j_2\le k) \ s_{i_{j_1}}=s_{i_{j_2}} ]\le \frac{{k\choose 2}}{h}\le \epsilon.
\end{eqnarray*}

\end{proof}

We are now ready to give our main three results.

We first show

\begin{theorem} Let $\epsilon<1/2^k$. An $\epsilon$-almost $k$-wise independent set in the $L_\infty$-norm
of size
$$O\left(\frac{k^3\log n}{2^k \epsilon^3}\right)$$
can be constructed in polynomial time.
\end{theorem}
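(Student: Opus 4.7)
The plan is to combine a dense perfect hash family (to compress $[n]$ down to a small alphabet $[q]$) with an $\epsilon$-almost $k$-wise independent set on $\{0,1\}^q$ built via Theorem~\ref{L6}. The point is that Theorem~\ref{L6} runs in time $\tilde O(q^k/\epsilon^4)$, so it is efficient only when $q$ is small, whereas Lemma~\ref{Den} cheaply shrinks the effective dimension from $n$ to any $q > 4k^2/\epsilon$ at the cost of only a constant-factor loss in the approximation error.

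Set $\epsilon' = \epsilon/2$ and take $q = \lceil n^{1/k}\rceil$, so that $q^k = O(n)$ and, in the nontrivial regime $n \gtrsim (k^2/\epsilon)^k$, $\log(\epsilon' q/k^2) = \Theta((\log n)/k)$. Apply Lemma~\ref{Den} to build a $(1-\epsilon')$-dense $(n,q,k)$-perfect hash family $H \subseteq [q]^n$ of size $|H| = O(k^2\log n/(\epsilon'\log(\epsilon' q/k^2))) = O(k^3/\epsilon)$ in polynomial time, and apply Theorem~\ref{L6} on $\{0,1\}^q$ to build an $\epsilon'$-almost $k$-wise independent set $T \subseteq \{0,1\}^q$ (in the $L_\infty$-norm) of size $|T| = O(k\log q/(2^k\epsilon'^2)) = O(\log n/(2^k\epsilon^2))$ in polynomial time $\tilde O(q^k/\epsilon^4) = \tilde O(n/\epsilon^4)$.

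Define
$$S' = \{(t_{h(1)},t_{h(2)},\ldots,t_{h(n)}) : h \in H,\ t \in T\} \subseteq \{0,1\}^n,$$
of size $|S'| \le |H|\cdot|T| = O(k^3\log n/(2^k\epsilon^3))$, as required. To verify $\epsilon$-almost $k$-wise independence in $L_\infty$, fix $I = (i_1,\ldots,i_k)$ and $\sigma \in \{0,1\}^k$ and decompose
$$\Pr_{s\sim U_{S'}}[s_I = \sigma] = \frac{1}{|H|}\sum_{h\in H}\Pr_{t\sim U_T}[t_{h(i_1)} = \sigma_1,\ldots,t_{h(i_k)} = \sigma_k].$$
Call $h$ good if $h(i_1),\ldots,h(i_k)$ are all distinct. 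By density of $H$ at most an $\epsilon'$ fraction of $h \in H$ is bad; for a good $h$ the inner probability is a marginal of $U_T$ on a $k$-subset of its coordinates and thus within $\epsilon'$ of $1/2^k$; for a bad $h$ it lies only in $[0,1]$. Averaging and applying the triangle inequality yields $|\Pr_{s\sim U_{S'}}[s_I = \sigma] - 1/2^k| \le \epsilon' + \epsilon' = \epsilon$.

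The main obstacle is the parameter balancing in the choice of $q$: it must be large enough for a small dense PHF to exist (requiring $q \gg k^2/\epsilon$) yet small enough that $q^k$ remains polynomial in $n$. The choice $q \approx n^{1/k}$ threads this needle, and gives simultaneously $|H|\sim k^3/\epsilon$ and $|T|\sim \log n/(2^k\epsilon^2)$, which multiply to the claimed bound. When $n$ falls below the threshold $(k^2/\epsilon)^k$ one should handle the degenerate case separately, e.g.\ by noting that the target size bound is then large enough to be attained by a direct application of Theorem~\ref{L6} to a padded instance.
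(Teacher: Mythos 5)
Your main-regime construction is essentially the paper's: compose a $(1-\epsilon/2)$-dense $(n,q,k)$-PHF from Lemma~\ref{Den} with the set produced by Theorem~\ref{L6} on $\{0,1\}^q$, and verify the $L_\infty$ error by splitting the hash functions into good (injective on $I$) and bad ones. The paper fixes $q=\lceil (k/\epsilon)^3\rceil$ and restricts to $n\ge (k/\epsilon)^{3k}$, whereas you take $q=\lceil n^{1/k}\rceil$; both choices give $|H|\cdot|R|=O(k^3\log n/(2^k\epsilon^3))$ and polynomial time in that regime. One cosmetic slip: at your stated threshold $n\approx (k^2/\epsilon)^k$ you get $\log(\epsilon' q/k^2)\approx \log(\epsilon'/\epsilon)<0$, not $\Theta((\log n)/k)$, so the exponent in the threshold needs a larger constant (e.g.\ $n\ge (k^2/\epsilon')^{2k}$); this is easily repaired.

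The genuine gap is your treatment of the small-$n$ case. You propose to handle $n$ below the threshold by ``a direct application of Theorem~\ref{L6} to a padded instance,'' but Theorem~\ref{L6} runs in time $\tilde O(n^k/\epsilon^4)$, which is not polynomial in $n$ and $1/\epsilon$ once $k$ grows, and padding only makes this worse: padding up to the threshold forces dimension (equivalently $q^k$) of order $(k^2/\epsilon)^{\Theta(k)}$, which is superpolynomial in $n$ and $1/\epsilon$ --- e.g.\ $n=2^k$ and $\epsilon=2^{-(k+1)}$ (permitted, since $\epsilon<1/2^k$) give work $2^{\Theta(k^2)}$ against $poly(n,1/\epsilon)=2^{O(k)}$. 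Note also that because $\epsilon<1/2^k$, the sub-threshold regime contains every $n\le 2^{\Omega(k^2)}$, so it is far from degenerate and the theorem's ``polynomial time for any $k$'' claim hinges on it. The paper closes this case with a different tool: for $n\le (k/\epsilon)^{3k}$ it invokes the polynomial-time construction of \cite{AGHP90}, of size $O\left(k^2\log^2 n/(\epsilon^2\log^2(1/\epsilon))\right)$, and absorbs the extra factors using $\log n\le 3k\log(k/\epsilon)$ together with $2^k\epsilon<1\le \log(1/\epsilon)$ to get the bound $O(k^3\log n/(2^k\epsilon^3))$. You need some genuinely polynomial-time fallback of this kind; Theorem~\ref{L6}, padded or not, does not supply it.
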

\begin{proof} In this proof, $c_i$, $i=1,2,3,\cdots$, are some constants.
For $n\le (k/\epsilon)^{3k}$, the size of the $\epsilon$-almost $k$-wise independent set
in~\cite{AGHP90} (see the table) is at most
$$\frac{c_1k^2\log^2n}{\epsilon^2\log^2(1/\epsilon)}\le \frac{c_2k^3\log n}{\epsilon^2\log(1/\epsilon)}
\le \frac{c_2k^3\log n}{2^k\epsilon^3}.$$

Now let $n\ge (k/\epsilon)^{3k}$. Let $q=\lceil (k/\epsilon)^{3}\rceil$ and $H\subseteq [q]^n$ be a $(1-\epsilon/4)$-dense $(n,q,k)$-PHF of size
$$|H|=\frac{c_3k^2\log n}{\epsilon \log(\epsilon q/k^2)}\le \frac{c_4k^2\log n}{\epsilon \log(1/\epsilon)}.$$
Such an $H$ exists by Lemma~\ref{Den} and can be constructed in polynomial time.
By Theorem~\ref{L6}, an $(\epsilon/4)$-almost $k$-wise independent set
$R\subseteq \{0,1\}^q$ in the $L_\infty$-norm
of size
$$|R|=\frac{c_5 k\log q}{2^k \epsilon^2}\le \frac{c_6 k\log (1/\epsilon)}{2^k \epsilon^2}$$
can be constructed in time $\tilde O(q^k/\epsilon^4)=poly(n,1/\epsilon)$.
Define $$S= \{(v_{u_1},\ldots,v_{u_n})\ |\ u\in H, v\in R\}.$$

The size of $S$ is
$$|R|\cdot|H|=\frac{c_7 k^3\log n}{2^k\epsilon^3}.$$
Now for a random uniform $s\in S$, any $1\le i_1<\cdots<i_k\le n$ and $\xi\in\{0,1\}^k$,
\begin{eqnarray*}
\Pr_{s\in S} [(s_{i_1},\ldots,s_{i_k})=\xi]&=&\Pr_{v\in R,u\in H} [(v_{u_{i_1}},\ldots,v_{u_{i_k}})=\xi]\\
&\le& \Pr_{v\in R,u\in H} [(v_{u_{i_1}},\ldots,v_{u_{i_k}})=\xi\ |\ u_{i_1},\ldots,u_{i_k}\mbox{\ are distinct}]+\\
&& \ \ \  \Pr_{v\in R,u\in H}[u_{i_1},\ldots,u_{i_k}\mbox{\ are not distinct}]\\
&\le& \left(\frac{1}{2^k}+\frac{\epsilon}{4}\right)+\frac{\epsilon}{4}\le \frac{1}{2^k}+\epsilon
\end{eqnarray*}
and
\begin{eqnarray*}
\Pr_{s\in S} [(s_{i_1},\ldots,s_{i_k})=\xi]&=&\Pr_{v\in R,u\in H} [(v_{u_{i_1}},\ldots,v_{u_{i_k}})=\xi]\\
&\ge& \Pr_{v\in R,u\in H} [(v_{u_{i_1}},\ldots,v_{u_{i_k}})=\xi\ |\ u_{i_1},\ldots,u_{i_k}\mbox{\ are distinct}]\cdot\\
&& \ \ \  \Pr_{v\in R,u\in H}[u_{i_1},\ldots,u_{i_k}\mbox{\ are distinct}]\\
&\ge& \left(\frac{1}{2^k}-\frac{\epsilon}{4}\right)\left(1-\frac{\epsilon}{4}\right)\ge \frac{1}{2^k}-\epsilon.
\end{eqnarray*}
Therefore, $S$ is $\epsilon$-almost $k$-wise independent set in $L_\infty$-norm.
\end{proof}

For the $L_1$-norm we prove
\begin{theorem} An $\epsilon$-almost $k$-wise independent set in the $L_1$-norm
of size
$$\tilde O\left(\frac{2^k+\log n}{ \epsilon^3}\right)$$
can be constructed in polynomial time.
\end{theorem}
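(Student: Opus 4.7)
The plan is to follow the same template used in the previous theorem for the $L_\infty$-norm: construct a $(1-\epsilon/4)$-dense $(n,q,k)$-PHF $H$ via Lemma~\ref{Den}, construct an $(\epsilon/4)$-almost $k$-wise independent set $R\subseteq\{0,1\}^q$ in the $L_1$-norm in the reduced dimension $q$ via Theorem~\ref{L07E}, and output $S=\{(v_{u_1},\ldots,v_{u_n})\mid u\in H,\ v\in R\}$ of size $|H|\cdot|R|$.

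For correctness, I would condition on whether $u_{i_1},\ldots,u_{i_k}$ are distinct. Writing $p=\Pr_{u\in H}[u_I\mbox{ distinct}]$ and letting $R_\sigma$ and $N_\sigma$ denote the conditional probabilities $\Pr[v_{u_I}=\sigma\mid u_I\mbox{ distinct}]$ and $\Pr[v_{u_I}=\sigma\mid u_I\mbox{ not distinct}]$ respectively, a triangle inequality summed over all $\sigma\in\{0,1\}^k$ gives
$$\sum_\sigma\left|\Pr_{s\in S}[s_I=\sigma]-\frac{1}{2^k}\right|\le p\sum_\sigma\left|R_\sigma-\frac{1}{2^k}\right|+(1-p)\sum_\sigma\left|N_\sigma-\frac{1}{2^k}\right|\le \frac{\epsilon}{4}+\frac{\epsilon}{4}\cdot 2\le \epsilon,$$
since the first sum is at most $\epsilon/4$ by the $L_1$-guarantee on $R$, the second sum is at most $\sum_\sigma N_\sigma+\sum_\sigma 1/2^k=2$, and $(1-p)\le\epsilon/4$ by density of $H$.

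The crucial new choice, which turns the usual multiplicative factor $\log n\cdot 2^k$ into the additive target $2^k+\log n$, is to take $q=\lceil n^{1/k}\rceil$ (rather than $q$ polynomial in $k/\epsilon$ as in the $L_\infty$ case). For $n\ge (k/\epsilon)^{ck}$ with a suitably large constant $c$, this gives $\log(\epsilon q/k^2)=\Theta((\log n)/k)$, so Lemma~\ref{Den} yields $|H|=O(k^3/\epsilon)$ independent of $n$. On the other hand, Theorem~\ref{L07E} applied in dimension $q$ with $d=k$ gives $|R|=O((2^k+k\log q)/\epsilon^2)=O((2^k+\log n)/\epsilon^2)$ in time $\tilde O(q^{2k}/\epsilon^4)=\tilde O(n^2/\epsilon^4)$, which is polynomial. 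Therefore $|H|\cdot|R|=\tilde O((2^k+\log n)/\epsilon^3)$ as required, with the $\log n$ moving from a factor in $|H|$ to a summand inside $|R|$.

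For the boundary case $n<(k/\epsilon)^{ck}$, where the PHF bound of Lemma~\ref{Den} is not useful, I would invoke the polynomial-time AGHP90 $L_1$-norm construction listed in the table in the introduction. In this regime $\log n=O(k\log(k/\epsilon))$, so the AGHP90 size specializes to at most $\tilde O(k^4 2^k/\epsilon^2)=\tilde O(2^k/\epsilon^2)\le \tilde O((2^k+\log n)/\epsilon^3)$ for $\epsilon<1$, fitting the target. The main obstacle I anticipate is pinning down the correct constant $c$ so that $\log(\epsilon q/k^2)=\Omega((\log n)/k)$ holds uniformly for $q=\lceil n^{1/k}\rceil$, and dovetailing this threshold with the small-$n$ regime handled by AGHP90; the rest of the argument is then routine arithmetic parallel to the $L_\infty$ construction.
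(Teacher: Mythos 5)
Your proposal is correct, and it uses the same basic machinery as the paper: compose a $(1-\epsilon/4)$-dense $(n,q,k)$-PHF from Lemma~\ref{Den} with an $(\epsilon/4)$-almost $k$-wise independent set in dimension $q$ built by Theorem~\ref{L07E}, and fall back on the \cite{AGHP90} construction when $n$ is small relative to $(k/\epsilon)^{O(k)}$. The difference is in the case analysis and choice of $q$: the paper splits into $n>2^{2^k}$ and $n\le 2^{2^k}$, using $q=\lceil 2^{2^k/k}k^2/\epsilon\rceil$ in the first regime (with a further sub-case via the AGC+Hadamard construction when $\epsilon<1/2^{2^k/k}$) and $q=\lceil n^{1/k}\rceil$ only in the second, obtaining $\tilde O(\log n/\epsilon^3)$ and $\tilde O(2^k/\epsilon^3)$ separately. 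You instead use the single choice $q=\lceil n^{1/k}\rceil$ whenever $n\ge(k/\epsilon)^{ck}$, which makes $|H|=O(k^3/\epsilon)$ independent of $n$ and pushes the $\log n$ into the additive $k\log q$ term of $|R|$, so the combined bound $\tilde O((2^k+\log n)/\epsilon^3)$ comes out in one shot; the running time $\tilde O(q^{2k}/\epsilon^4)=\tilde O(2^{2k}n^2/\epsilon^4)$ stays polynomial since $2^{2k}\le n^{O(1)}$ in that regime. This buys a more uniform argument (no $2^{2^k}$ threshold and no AGC+Hadamard sub-case), at the cost only of the routine bookkeeping you already flag (choosing $c$ so that $q>4k^2/\epsilon$ and $\log(\epsilon q/k^2)=\Omega((\log n)/k)$). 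Your conditioning argument for the $L_1$ error, $p\sum_\sigma|R_\sigma-2^{-k}|+(1-p)\sum_\sigma|N_\sigma-2^{-k}|\le \epsilon/4+\epsilon/2$, is also stated more carefully than the corresponding passage in the paper, which only bounds $\Pr[f(s_I)=1]$ for boolean $f$ from above.
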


This theorem follows from the following two results

\ignore{
\begin{theorem} An $\epsilon$-almost $k$-wise set in the $L_1$-norm
of size
$$O\left(\frac{k^3 2^k\log n}{ \epsilon^3}\right)$$
can be constructed in polynomial time.
\end{theorem}
\begin{proof} In this proof, $c_i$, $i=1,2,3,\cdots$, are some constants.
For $n\le (k/\epsilon)^{3k}$, the size of the $\epsilon$-almost $k$-wise set
in~\cite{AGHP90} (see the table) is at most
$$\frac{c_1k^22^k\log^2n}{\epsilon^2 (k^2+\log^2(1/\epsilon))}\le
\frac{c_2k(\log k)2^k\log n}{\epsilon^2\log(1/\epsilon)}
\le \frac{c_2k^3 2^k\log n}{\epsilon^3}.$$

Now let $n\ge (k/\epsilon)^{3k}$. Let $(k/\epsilon)^{3}\le q\le 2(k/\epsilon)^{3}$.
Let $H\subseteq [q]^n$ be a $(1-\epsilon/4)$-dense $(n,q,k)$-PHF of size
$$|H|=\frac{c_3k^2\log n}{\epsilon \log(\epsilon q/k^2)}\le \frac{c_4k^2\log n}{\epsilon \log(k/\epsilon)}.$$
By Lemma~\ref{Li5}, an $(\epsilon/4)$-almost $k$-wise independent set
$R\subseteq \{0,1\}^q$ in the $L_\infty$-norm
of size
$$|R|=\frac{c_5 2^k k\log q}{ \epsilon^2}\le \frac{c_6 k 2^k\log (k/\epsilon)}{ \epsilon^2}$$
can be constructed in time $2^{2k}q^{(k-1)/2}/\epsilon^4=poly(n,1/\epsilon)$.
Define $$S= \{(v_{u_1},\ldots,v_{u_n})\ |\ u\in H, v\in R\}.$$

The size of $S$ is
$$|R|\cdot|H|=\frac{c_7 k^3 2^k\log n}{\epsilon^3}.$$
It is easy to show that $S$ is $\epsilon$-almost $k$-wise independent set in $L_1$-norm.
\end{proof}}

\begin{lemma} For $n>2^{2^k}$, an $\epsilon$-almost $k$-wise independent set in the $L_1$-norm
of size
$$O\left(\frac{k^3 \log^{1/2}(1/\epsilon)\log n}{ \epsilon^3}\right)=\tilde O\left(\frac{\log n}{\epsilon^3}\right)$$
can be constructed in polynomial time.
\end{lemma}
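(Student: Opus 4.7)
The plan is to mimic the composition strategy used in the preceding $L_\infty$ theorem. I would set
$$S \;=\; \{(v_{u_1},\ldots,v_{u_n}) \mid u\in H,\ v\in R\},$$
where $H\subseteq[q]^n$ is a $(1-\epsilon/4)$-dense $(n,q,k)$-perfect hash family from Lemma~\ref{Den}, and $R\subseteq\{0,1\}^q$ is a small $(\epsilon/4)$-almost $k$-wise independent set in the $L_1$-norm, built via Theorem~\ref{L07E} (with $d=k$, so that $|R|=O((2^k+k\log q)/\epsilon^2)$ at construction time $\tilde O(q^{2k}/\epsilon^4)$).

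For correctness in $L_1$: fix $I=(i_1,\ldots,i_k)$ and let $E$ be the event that $u_{i_1},\ldots,u_{i_k}$ are distinct. Using $\Pr[\bar E]\le \epsilon/4$ from density of $H$, I would write
$$\sum_{\xi\in\{0,1\}^k}\bigl|\Pr_{v,u}[v_{u_I}=\xi]-2^{-k}\bigr| \;\le\; \Pr[E]\sum_\xi\bigl|\Pr_{v,u}[v_{u_I}=\xi\mid E]-2^{-k}\bigr| + 2\Pr[\bar E].$$
Conditioned on $E$, the tuple $(v_{u_{i_1}},\ldots,v_{u_{i_k}})$ is a restriction of $v\sim U_R$ to $k$ distinct coordinates, so the $L_1$ property of $R$ (plus triangle inequality over the distribution of the distinct indices) bounds the first sum by $\epsilon/4$. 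Altogether the $L_1$ distance is at most $\epsilon/4+\epsilon/2<\epsilon$.

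For the size bound, I would pick $q$ so that $\log q\approx 2^k/k$, which balances the two terms in $|R|=O((2^k+k\log q)/\epsilon^2)=O(2^k/\epsilon^2)$ and, crucially in the regime $n>2^{2^k}$, gives $q^{2k}\le 2^{2\cdot 2^k}<n^2$, so $R$ is constructed in polynomial time. Combining with $|H|=O(k^2\log n/(\epsilon\log(\epsilon q/k^2)))=O(k^3\log n/(\epsilon 2^k))$ from Lemma~\ref{Den} yields $|S|=|H|\cdot|R|=O(k^3\log n/\epsilon^3)$, matching the stated bound.

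The main obstacle is handling the edge cases in choosing $q$: one needs simultaneously $q>4k^2/\epsilon$ (for Lemma~\ref{Den} to apply), $\log(\epsilon q/k^2)=\Omega(\log q)$ (so the denominator in $|H|$ is not wasted), and $q^{2k}\le \mathrm{poly}(n,1/\epsilon)$ (for polynomial construction time). When $\log(1/\epsilon)$ is much larger than $2^k/k$, the naive choice $q=2^{2^k/k}$ fails the first two constraints; in that case one must inflate $q$ to roughly $2^{\sqrt{\log(1/\epsilon)}}\cdot(k/\epsilon)$-type values, which is exactly where the extra $\log^{1/2}(1/\epsilon)$ factor in the stated bound arises. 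A careful case split between the regimes $2^k/k\gtrless \log(k^2/\epsilon)$ (choosing $q$ accordingly in each) then gives the uniform bound $O(k^3\log^{1/2}(1/\epsilon)\log n/\epsilon^3)$.
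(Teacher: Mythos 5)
Your construction in the main regime is exactly the paper's: compose a $(1-\epsilon/4)$-dense $(n,q,k)$-PHF from Lemma~\ref{Den} with the $(\epsilon/4)$-almost $k$-wise independent set $R\subseteq\{0,1\}^q$ in the $L_1$-norm from Theorem~\ref{L07E}, with $q$ of order $2^{2^k/k}k^2/\epsilon$; your $L_1$ correctness argument (conditioning on the event that $u_{i_1},\ldots,u_{i_k}$ are distinct and paying $2\Pr[\bar E]$) is fine, indeed slightly more carefully stated than the paper's. The genuine gap is in your treatment of the small-$\epsilon$ regime $\log(1/\epsilon)\gg 2^k/k$. There you propose to keep the composition and ``inflate'' $q$ to roughly $2^{\sqrt{\log(1/\epsilon)}}k^2/\epsilon$, but this cannot satisfy the lemma's polynomial-time claim: Lemma~\ref{Den} requires $n>q>4k^2/\epsilon$, so $q\ge 4k^2/\epsilon$ is forced, and then Theorem~\ref{L07E} costs $\tilde O(q^{2k}/\epsilon^4)\ge (1/\epsilon)^{2k}$, which is superpolynomial in $n$ and $1/\epsilon$ whenever $k\log(1/\epsilon)\gg\log n+\log(1/\epsilon)$ (e.g.\ $k=\log\log n$, $\epsilon=1/n$, which lies in your problematic regime since $\log n>2^k$). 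Worse, for $\epsilon\lesssim k^2/n$ the requirement $q>4k^2/\epsilon$ collides with $q<n$, so Lemma~\ref{Den} is not even applicable. Your ``careful case split choosing $q$ accordingly'' therefore does not close the argument; no choice of $q$ rescues the composition in that regime.

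The paper closes this regime by a different idea: when $\epsilon<2^{-2^k/k}$ it abandons the PHF composition altogether and simply invokes the known polynomial-time AGC+Hadamard construction (second row of the table), of size $O\bigl(k2^{3k/2}\log n/(\epsilon^3(k+\log(1/\epsilon)))\bigr)$; since in this regime $2^k\le k\log(1/\epsilon)$, this is $O\bigl(k^{5/2}\log^{1/2}(1/\epsilon)\log n/\epsilon^3\bigr)$, and that is the actual source of the $\log^{1/2}(1/\epsilon)$ factor in the statement (your explanation attributes it to the inflated $q$, which is not how the bound is obtained). So your proposal is correct and matches the paper for $\epsilon\ge 2^{-2^k/k}$ (up to the minor point that $q$ there should be taken as $\lceil 2^{2^k/k}k^2/\epsilon\rceil$, not $2^{2^k/k}$, and the time bound adjusted accordingly), but it is missing the fallback to a direct polynomial-time construction for smaller $\epsilon$, and without it the claimed polynomial running time fails.
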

\begin{proof}
Let $n\ge 2^{2^k}$. If $\epsilon<1/2^{2^k/k}$ then the AGC+Hadamard construction is
of size (see the table)
$$\frac{c_1 k2^{3k/2}\log n}{\epsilon^3(k+\log(1/\epsilon))}\le \frac{c_2 k^{2.5}\log^{1/2}(1/\epsilon)\log n}{\epsilon^3}.$$

Now let $\epsilon\ge 1/2^{2^k/k}$.
Let $q=\lceil 2^{2^k/k}k^2/\epsilon\rceil$ and $H\subseteq [q]^n$ be a $(1-\epsilon/4)$-dense $(n,q,k)$-PHF of size
$$\frac{c_3k^2\log n}{\epsilon \log(\epsilon q/k^2)}\le \frac{c_4k^3\log n}{\epsilon 2^k}.$$
Such an $H$ exists by Lemma~\ref{Den} and can be constructed in polynomial time.
By Theorem~\ref{L07E}, an $(\epsilon/4)$-almost $k$-wise independent set
$R\subseteq \{0,1\}^q$ in the $L_1$-norm
of size
$$\frac{c_5 (2^k+k\log q)}{\epsilon^2}\le \frac{c_6 2^k}{\epsilon^2}$$
can be constructed in time $\tilde O(q^{2k}/\epsilon^4)=poly(n,1/\epsilon)$.
Define $$S= \{(v_{u_1},\ldots,v_{u_n})\ |\ u\in H, v\in R\}.$$

The size of $S$ is
$$|R|\cdot|H|=\frac{c_7 k^3\log n}{\epsilon^3}.$$
Now for a random uniform $s\in S$, any $1\le i_1<\cdots<i_k\le n$ and any boolean function $f:\{0,1\}^k\to \{0,1\}$,
\begin{eqnarray*}
\Pr_{s\in S} [f(s_{i_1},\ldots,s_{i_k})=1]&=&\Pr_{v\in R,u\in H} [f(v_{u_{i_1}},\ldots,v_{u_{i_k}})=1]\\
&\le& \Pr_{v\in R,u\in H} [f(v_{u_{i_1}},\ldots,v_{u_{i_k}})=1\ |\ u_{i_1},\ldots,u_{i_k}\mbox{\ are distinct}]+\\
&& \ \ \  \Pr_{v\in R,u\in H}[u_{i_1},\ldots,u_{i_k}\mbox{\ are not distinct}]\\
&\le& \left(\Pr[f=1]+\frac{\epsilon}{4}\right)+\frac{\epsilon}{4}\le \Pr[f=1]+\epsilon.
\end{eqnarray*}
Therefore $S$ is $\epsilon$-almost $k$-wise independent set in $L_\infty$-norm.
\end{proof}

\begin{lemma} For $n\le 2^{2^k}$, an $\epsilon$-almost $k$-wise independent set in the $L_1$-norm
of size
$$\tilde O\left(\frac{2^k}{ \epsilon^3}\right)$$
can be constructed in polynomial time.
\end{lemma}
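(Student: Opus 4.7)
The plan is to split the range $n\le 2^{2^k}$ into two sub-cases depending on the size of $\log n$ relative to $k$ and $\epsilon$: for very small $n$ I will use AGHP's $L_1$-norm construction directly, and for the rest I will mirror sub-case 2 of the previous lemma (a perfect hash family $+$ Theorem~\ref{L07E}), but choose the alphabet $q$ smaller so that Theorem~\ref{L07E} still runs in time polynomial in $n$.

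Sub-case 1: $\log n\le 2k^2\log(k/\epsilon)$. The AGHP construction (first row of the $L_1$ table in Section~\ref{ST}) has size
$$\frac{k^2 2^k\log^2 n}{\epsilon^2\bigl(k^2+(\log\log n)^2+\log^2(1/\epsilon)\bigr)}.$$
Substituting $\log^2 n\le 4k^4\log^2(k/\epsilon)$ and bounding the denominator below by $k^2$ gives size $O(k^4 2^k\log^2(k/\epsilon)/\epsilon^2)=\tilde O(2^k/\epsilon^2)\le\tilde O(2^k/\epsilon^3)$, and AGHP is polynomial time.

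Sub-case 2: $2k^2\log(k/\epsilon)<\log n\le 2^k$. Set $q=\lceil n^{1/(2k)}\rceil$, so $\log q=\log n/(2k)>k\log(k/\epsilon)$; in particular $q>4k^2/\epsilon$ and $q<n$, and $\log(\epsilon q/k^2)\ge\log q/2$. Then Lemma~\ref{Den} yields a $(1-\epsilon/4)$-dense $(n,q,k)$-PHF $H$ of size $|H|=O(k^2\log n/(\epsilon\log(\epsilon q/k^2)))=O(k^3/\epsilon)$. Applying Theorem~\ref{L07E} on $\{0,1\}^q$ with $d=k$ produces an $(\epsilon/4)$-almost $k$-wise independent set $R\subseteq\{0,1\}^q$ in the $L_1$-norm of size $|R|=O((2^k+k\log q)/\epsilon^2)=O(2^k/\epsilon^2)$, where I use $k\log q=(\log n)/2\le 2^{k-1}$. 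The Theorem~\ref{L07E} runtime is $\tilde O(q^{2k}/\epsilon^4)=\tilde O(n/\epsilon^4)$, which is polynomial. Defining $S=\{(v_{u_1},\ldots,v_{u_n}):u\in H,\,v\in R\}$ gives $|S|=|H|\cdot|R|=O(k^3 2^k/\epsilon^3)=\tilde O(2^k/\epsilon^3)$.

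Finally, to show that $S$ is $\epsilon$-almost $k$-wise independent in $L_1$, I will repeat the Boolean-function argument of the previous lemma: for every $I=(i_1,\ldots,i_k)$ and every Boolean $f:\{0,1\}^k\to\{0,1\}$, conditioning on whether the indices $u_{i_1},\ldots,u_{i_k}$ are distinct yields $|\Pr_{s\in S}[f(s_I)=1]-\Pr_U[f=1]|\le\epsilon/2$, and the $L_1$-bound follows from $\|D-U\|_1=2\max_f|\E_D f-\E_U f|$ where the maximum ranges over Boolean $f$. The main obstacle I expect is calibrating the threshold between the two sub-cases so that the PHF branch simultaneously admits a small $|H|$ (which requires $q$ not too small) and a polynomial-time construction of $R$ via Theorem~\ref{L07E} (which requires $q$ not too large); the choice $q=n^{1/(2k)}$ is what exactly balances these two constraints in the regime $\log n>2k^2\log(k/\epsilon)$.
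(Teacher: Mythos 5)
Your proposal is correct and follows essentially the same route as the paper: AGHP for small $n$, and otherwise composing the dense PHF from Lemma~\ref{Den} with the $L_1$ set from Theorem~\ref{L07E} over a reduced alphabet of size $q$ polynomial in $n^{1/k}$ (the paper takes $q=\lceil n^{1/k}\rceil$ with the case split at $n<(4k^2/\epsilon)^{2k}$, which also sidesteps the small-$k$, large-$\epsilon$ edge cases in your calibration $q=\lceil n^{1/(2k)}\rceil$, e.g.\ ensuring $q>4k^2/\epsilon$ and $\log(\epsilon q/k^2)\ge \tfrac12\log q$ without further argument). Your explicit Boolean-function argument for the $L_1$ guarantee of the composed set is the same as in the paper's preceding lemma and merely fills in details the paper omits here.
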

\begin{proof}
If $n<(4k^2/\epsilon)^{2k}$ then we use the bound in \cite{AGHP90} (see the table) and get an $\epsilon$-almost $k$-wise independent set in the $L_1$-norm
of size
$$O\left(\frac{k^2\min(k^2,\log^2(1/\epsilon))\cdot  2^k}{\epsilon^2}\right)=\tilde O\left(\frac{2^k}{\epsilon^2}\right).$$
Otherwise, we take $q=\lceil n^{1/k}\rceil\ge (4k^2/\epsilon)^2$ and use Theorem~\ref{L07E}
to construct an $(\epsilon/4)$-almost $k$-wise set in the $L_1$-norm, $R\subseteq \{0,1\}^q$,
of size (notice that $\log n\le 2^k$) $$O\left(\frac{2^k}{\epsilon^2}\right)$$ in time $\tilde O(q^{2k}/\epsilon^4)=poly(n,1/\epsilon)$.
Composing this with the $(\epsilon/4)$-dense $(n,q,k)$-PFH in Lemma~\ref{Den} of size
$$O\left(\frac{k^2\log n}{\epsilon \log(\epsilon q/k^2)}\right)=O\left(\frac{k^3}{\epsilon}\right) $$
gives the result.
\end{proof}

\section{Lower Bounds}\label{LowerB}
In this section we give the following two lower bounds

\begin{theorem}\label{LB1} Let $1/poly(n)\le \epsilon<1 /2^{k+1}$ and $k<n/2$.
Any $\epsilon$-almost $k$-wise independent set in the $L_\infty$-norm is of size
$$\Omega\left(\frac{\log n}{2^k\epsilon^2 \log(1/2^{k+1}\epsilon)}\right)
=\tilde \Omega\left(\frac{\log n}{2^k\epsilon^2 }\right).$$
\end{theorem}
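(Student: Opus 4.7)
\bigskip\noindent\textbf{Proof plan.}

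The plan is to translate the $L_\infty$ hypothesis into a Fourier bias estimate, extract from it a binary code whose codewords concentrate in a narrow band around weight $m/2$, and then apply the classical linear programming (LP) bound on such codes---the same bound that underlies Alon's lower bound for $\epsilon$-biased sets.

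First I convert the hypothesis into a Fourier-bias bound. For every non-empty $T\subseteq[n]$ with $|T|\le k$, pad $T$ to an arbitrary $k$-subset $I\supseteq T$ and expand $\hat U_S(T)=\sum_\tau(\Pr_{s\sim U_S}[s_T=\tau]-2^{-|T|})(-1)^{\sum_i\tau_i}$; marginalizing the $k-|T|$ coordinates of $I\setminus T$ and using $|\Pr[s_I=\sigma]-2^{-k}|\le\epsilon$ gives $|\Pr[s_T=\tau]-2^{-|T|}|\le 2^{k-|T|}\epsilon$, hence $|\hat U_S(T)|\le 2^k\epsilon$. Equivalently, setting $\gamma:=2^{k-1}\epsilon<1/2$ (using $\epsilon<1/2^{k+1}$) and letting $A\in\{0,1\}^{m\times n}$ be the membership matrix of $S$, every codeword $c_T:=A\,\mathbf{1}_T$ of the column code $C=\mathrm{colspan}(A)\subseteq\FF_2^m$ with $0<|T|\le k$ has Hamming weight in $[m/2-\gamma m,\,m/2+\gamma m]$.

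Next I would extract a large near-equidistant subcode and apply the LP bound. Ranging $T$ over all subsets of $[n]$ of size at most $\lfloor k/2\rfloor$ produces $N:=\sum_{j=0}^{\lfloor k/2\rfloor}\binom{n}{j}$ codewords that are pairwise distinct (any equality $c_T=c_{T'}$ would force $c_{T\triangle T'}=0$ with $|T\triangle T'|\le k$, contradicting the bias bound), and every pairwise Hamming distance $\wt(c_T\oplus c_{T'})=\wt(c_{T\triangle T'})$ lies in the narrow band $[m/2-\gamma m,\,m/2+\gamma m]$. Applying the Delsarte / Kabatyanskii--Levenshtein LP bound in the form used in Alon's $\epsilon$-bias lower bound then gives $\log N\le C\cdot\gamma^2 m\log(1/\gamma)$ for a universal $C$, i.e.\ $m=\Omega(\log N/(\gamma^2\log(1/\gamma)))$. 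Substituting $\gamma=2^{k-1}\epsilon$ and $\log N=\Theta(k\log(n/k))$ (valid since $k<n/2$), and absorbing the $k$ and $\log k$ factors into $\tilde\Omega$, yields the bound claimed in the theorem.

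The hard part will be the LP bound itself: getting the sharp $\log(1/\gamma)$ factor rather than the weaker $\gamma^2 m$ bound that a na\"ive Welch/Rankin second-moment argument would give requires the LP-bound test polynomial to be chosen as a suitably shifted and scaled Krawtchouk (Chebyshev) polynomial tuned to the narrow-band width $\gamma$, exactly as in Alon's argument. A secondary nuisance, handled using the assumption $\epsilon\ge 1/\mathrm{poly}(n)$, is checking that $\log(1/\gamma)=\Theta(\log(1/(2^{k+1}\epsilon)))$ in the stated regime, so the inner logarithm in the theorem takes exactly the advertised form.
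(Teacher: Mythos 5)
Your reduction to a near-equidistant code plus the LP (MRRW/Alon) bound is sound as far as it goes, but it does not prove the stated theorem: it loses a factor of roughly $2^k/k$. Working over all of $S$, a parity of any $T$ with $|T|\le k$ has bias only $O(2^{k}\epsilon)$ (your own computation gives $\gamma=2^{k-1}\epsilon$), so the LP bound yields $m=\Omega\bigl(\log N/(\gamma^2\log(1/\gamma))\bigr)=\Omega\bigl(k\log n/(2^{2k}\epsilon^2\log(1/(2^{k}\epsilon)))\bigr)$. Taking $N=\sum_{j\le k/2}\binom{n}{j}$ codewords only buys the factor $k$ in the numerator; nothing in your plan recovers the missing factor $2^k$ needed to reach $\Omega\bigl(\log n/(2^{k}\epsilon^2\log(1/(2^{k+1}\epsilon)))\bigr)$, and in fact $\Omega(\log n/(2^k\epsilon^2))$ cannot be extracted from the parity-bias information alone applied globally, since the global bias really can be as large as $\Theta(2^k\epsilon)$.

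The paper's proof contains exactly the idea your plan is missing: a partition (fiber) argument. Fix a constant even $r$ (here the hypothesis $\epsilon\ge 1/poly(n)$ is used, to guarantee the first branch of the LP-based Lemma~\ref{MRRW} applies with only $r$-coordinate parities, $r=O(1)$), and split $S$ into the $2^{k-r}$ sets $S_\xi=\{s\in S:(s_1,\ldots,s_{k-r})=\xi\}$. The $k$-wise $L_\infty$ condition controls the joint probabilities of the $(k-r)$-prefix together with any $r$ further coordinates, so \emph{within each fiber} $S_\xi$ every parity of $r$ of the remaining coordinates has conditional bias at most $2^{k+1}\epsilon$. Applying Lemma~\ref{MRRW} to each fiber gives $|S_\xi|=\Omega\bigl(\log n/(2^{2k}\epsilon^2\log(1/(2^{k+1}\epsilon)))\bigr)$, and summing over the $2^{k-r}$ fibers multiplies this by $2^{k-r}$, which cancels one factor of $2^k$ and produces the claimed $1/(2^k\epsilon^2)$ dependence. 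So your bias computation and LP machinery are the right ingredients (they are essentially the paper's Lemma~\ref{MRRW}), but you must apply them fiberwise after conditioning on a $(k-r)$-coordinate prefix, not once to the whole set; as written, your argument only establishes a strictly weaker lower bound.
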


\begin{theorem}\label{LB2}
Let $\epsilon>1/n^{k/5}$. Any $\epsilon$-almost $k$-wise independent set in the $L_1$-norm is of size
$$\Omega\left(\frac{k\log n}{\epsilon^2 \log(1/\epsilon)}\right)
=\tilde \Omega\left(\frac{\log n}{\epsilon^2 }\right).$$
\end{theorem}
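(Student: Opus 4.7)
The plan is to reduce the theorem to a lower bound on ``$k$-wise $\epsilon$-biased'' sample spaces, and then invoke a code-theoretic (LP / Krawchouk-polynomial) argument in the spirit of \cite{AGHP90} to obtain the $1/\log(1/\epsilon)$ refinement in the denominator.

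\textbf{Step 1 (reduction to small biases).} I would first show that any $\epsilon$-almost $k$-wise independent set $S \subseteq \{0,1\}^n$ in the $L_1$-norm is $\epsilon$-biased on every non-empty $T \subseteq [n]$ with $|T|\le k$. Indeed, fixing such a $T$ and extending it to a size-$k$ subset $T'$, the projection $P_{T'}$ is $L_1$-$\epsilon$-close to uniform by hypothesis; marginalizing $k-|T|$ coordinates preserves $L_1$-closeness, so $\|P_T-U_T\|_1\le\epsilon$. The bias on $T$ then satisfies
$$\mathrm{bias}_T(S)=\Big|\sum_{\sigma\text{ even}}(P_T(\sigma)-U_T(\sigma))-\sum_{\sigma\text{ odd}}(P_T(\sigma)-U_T(\sigma))\Big|\le\|P_T-U_T\|_1\le\epsilon.$$

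\textbf{Step 2 (translation to an almost-balanced code).} View $S$ as an $m\times n$ matrix over $\FF_2$ (its rows being the elements of $S$) and consider the linear map $\phi_S:\FF_2^n\to\FF_2^m$, $\phi_S(v)=Sv$. Since $\mathrm{bias}_T(S)=|1-2\,\mathrm{wt}(\phi_S(e_T))/m|$, Step~1 translates to: for every nonzero $v\in\FF_2^n$ with $\mathrm{wt}(v)\le k$,
$$\mathrm{wt}(\phi_S(v))\in\bigl[(1-\epsilon)m/2,\;(1+\epsilon)m/2\bigr].$$

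\textbf{Step 3 (lower bound via an LP argument).} Let $\mathcal{V}=\{v\in\FF_2^n:\mathrm{wt}(v)\le\lfloor k/2\rfloor\}$, which has $\binom{n}{\le k/2}$ elements. For any $v,v'\in\mathcal{V}$, the sum $v+v'$ has weight at most $k$, so the $|\mathcal{V}|$ codewords $\{\phi_S(v)\}_{v\in\mathcal{V}}$ have pairwise distances lying in the narrow window $[(1-\epsilon)m/2,(1+\epsilon)m/2]$. Applying the Alon--Goldreich--H\aa stad--Peralta lower bound for such almost-equidistant codes (equivalently, for $\epsilon$-biased sample spaces on $\binom{n}{\le k/2}$-many characters), or re-deriving it through a Krawchouk-polynomial / Delsarte-LP argument on the restricted weight-enumerator, yields
$$m=\Omega\!\left(\frac{\log|\mathcal{V}|}{\epsilon^2\log(1/\epsilon)}\right)=\Omega\!\left(\frac{k\log n}{\epsilon^2\log(1/\epsilon)}\right),$$
where the hypothesis $\epsilon>1/n^{k/5}$ is precisely what is required for $\log|\mathcal{V}|=\Theta(k\log n)$ and to keep the weight window non-degenerate throughout the LP derivation.

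\textbf{Main obstacle.} The bottleneck is getting the extra $1/\log(1/\epsilon)$ factor in the denominator. A direct second-moment / Plotkin-type argument on the $\binom{n}{\le k/2}$ almost-equidistant codewords only gives the weaker $\Omega(k\log n/\epsilon^2)$; extracting the sharper bound requires the full Krawchouk / LP machinery developed in \cite{AGHP90,A09} for $\epsilon$-biased spaces, and the bulk of the technical work lies in verifying that this machinery adapts cleanly to the restricted regime of weight-$\le k$ characters (which is why the constraint $\epsilon>1/n^{k/5}$ appears in the statement).
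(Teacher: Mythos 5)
Your proposal is correct and follows essentially the same route as the paper: the paper likewise passes from $L_1$-closeness on $k$-tuples to bias at most $\epsilon$ for every nonzero parity supported on at most $k$ coordinates, and then invokes its Lemma~\ref{MRRW}, whose proof is exactly your Steps 2--3 (the almost-balanced code built from the $2^{k/2}\binom{n}{k/2}$ low-weight parity evaluations, bounded via the MRRW/LP results of \cite{AGHP90,A09}). The only cosmetic difference is that you re-derive that lemma inline, and your reading of the role of $\epsilon>1/n^{k/5}$ should be phrased as guaranteeing that the LP branch of the $\min$ in that lemma is the binding one (i.e.\ $\epsilon>1/(2^{k/4}\binom{n}{k/2}^{1/2})$), rather than as controlling $\log|\mathcal{V}|$.
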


The following is proved in \cite{AA07,A09}. We give the proof for completeness.
\begin{lemma}~\label{MRRW} Let $S\subset \{0,1\}^n$ and $r<n$ be an even number.
If for every distinct $i_1,\ldots,i_r \in [n]$, $\alpha\in \{0,1\}^r\backslash\{0^r\}$
and a random uniform $s\in S$ we have $1/2+\epsilon\ge \Pr[\alpha_1 s_{i_1}+\cdots+\alpha_r s_{i_r}]\ge 1/2-\epsilon$
then $$|S|\ge
\Omega\left(\min\left(\frac{r\log (n/r)}{\epsilon^2 \log (1/\epsilon)},2^{r/2}{n\choose r/2}\right)\right).$$

In particular, for $\epsilon>1\left/\left(2^{r/4}{n\choose r/2}^{1/2}\right)\right.$
$$|S|\ge
\Omega\left(\frac{r\log (n/r)}{\epsilon^2 \log (1/\epsilon)}\right).$$
\end{lemma}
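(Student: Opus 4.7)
My plan is to reduce the lemma to the classical MRRW linear programming bound, applied to the Fourier/MacWilliams dual of $S$. Let $\mu$ be the uniform distribution on $S$ and write $\hat\mu(a)=\E_{s\sim\mu}[(-1)^{\langle a,s\rangle}]$. The hypothesis is equivalent to $|\hat\mu(a)|\le 2\epsilon$ for every nonzero $a\in\{0,1\}^n$ with Hamming weight $|a|\le r$, since the condition asserts that every parity $\alpha_1 s_{i_1}+\cdots+\alpha_r s_{i_r}$ on at most $r$ coordinates has bias at most $2\epsilon$. Introduce the distance distribution $A_w=|\{(s,s')\in S\times S:d_H(s,s')=w\}|/|S|$ and its MacWilliams transform $A'_k=\frac{1}{|S|}\sum_w K_k(w)\,A_w$, with $K_k$ the $k$-th binary Krawtchouk polynomial. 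The standard identity $A'_k=\sum_{|a|=k}\hat\mu(a)^2$ then yields $0\le A'_k\le 4\epsilon^2\binom{n}{k}$ for $1\le k\le r$, while $A'_k\ge 0$ unrestrictedly.

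I then invoke Delsarte's LP inequality: for any polynomial $f(x)=\sum_k f'_k K_k(x)$ which is nonpositive on the integers $\{1,\ldots,n\}$ and satisfies $f'_0>0$ together with $f'_k\ge 0$ for $k>r$, one has
\begin{equation*}
|S|\cdot\Bigl(f'_0+\sum_{k=1}^{r}f'_k A'_k\Bigr)\le f(0).
\end{equation*}
Substituting $A'_k\le 4\epsilon^2\binom{n}{k}$ into the inner sum and taking $f$ to be the MRRW first-bound polynomial of degree $\approx r$ (built by squaring a shifted Krawtchouk eigenfunction so that nonpositivity on $\{1,\ldots,n\}$ and non-negativity of the $f'_k$ with $k>r$ both hold), an optimization of the shift parameter near the Krawtchouk turning point $\tfrac{n}{2}-\Theta(\sqrt{r\log(1/\epsilon)})$ reproduces the classical MRRW calculation and delivers
\begin{equation*}
|S|\ge\Omega\!\left(\frac{r\log(n/r)}{\epsilon^2\log(1/\epsilon)}\right).
\end{equation*}

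The second term $2^{r/2}\binom{n}{r/2}$ in the $\min$ emerges from a separate packing argument that takes over when $\epsilon$ is below roughly $2^{-r/4}\binom{n}{r/2}^{-1/2}$: in this regime the MRRW polynomial degenerates, but instead $S$ must essentially behave as an exact $r$-wise independent set, and the classical Rao/orthogonal-array bound forces $|S|\ge\sum_{j\le r/2}\binom{n}{j}$, with the extra factor $2^{r/2}$ arising from sign freedom on the chosen $r/2$ coordinates. The main obstacle is the construction and analysis of the MRRW witness polynomial: proving nonpositivity on $\{1,\ldots,n\}$ together with the required sign pattern for $f'_k$ requires the Christoffel--Darboux identity and sharp asymptotics of Krawtchouk polynomials near their first root, and arranging for the numerator $\log(n/r)$ (rather than $\log n$ as in the classical full-bias setting) requires a careful normalization that uses the restriction $|a|\le r$ in an essential way.
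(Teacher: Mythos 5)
Your Fourier/MacWilliams setup is fine ($A'_k=\sum_{|a|=k}\hat\mu(a)^2\ge 0$ and $A'_k\le 4\epsilon^2\binom{n}{k}$ for $1\le k\le r$ does follow from the hypothesis), but the key step is oriented the wrong way. The Delsarte inequality you invoke --- $f$ nonpositive on $\{1,\ldots,n\}$, $f'_k\ge 0$ for $k>r$, giving $|S|\bigl(f'_0+\sum_{k\le r}f'_kA'_k\bigr)\le f(0)$ --- is a valid inequality, but it is the \emph{upper-bound} program: it yields $|S|\le f(0)/(f'_0+\cdots)$, and no substitution of the bound $A'_k\le 4\epsilon^2\binom{n}{k}$ into the smaller side of a ``$\le$'' can ever produce the lower bound $|S|\ge\Omega(r\log(n/r)/(\epsilon^2\log(1/\epsilon)))$ that the lemma asserts. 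To run a direct LP argument for a lower bound you need the transposed sign pattern --- $f(j)\ge 0$ for $j=1,\ldots,n$ and $f'_k\le 0$ for $k>r$, which gives $|S|\ge f(0)/\bigl(f'_0+4\epsilon^2\sum_{1\le k\le r}\max(f'_k,0)\binom{n}{k}\bigr)$ --- and the MRRW first-bound polynomial you propose does not satisfy these constraints (it is built precisely for the other program). So the heart of your proof, the witness polynomial for the correct program together with the asymptotics giving $r\log(n/r)$, is missing. The treatment of the second term of the $\min$ has a similar gap: Rao's bound needs exact strength-$r$ independence, ``essentially behaves as an exact $r$-wise independent set'' is not justified for small nonzero $\epsilon$, and it would give roughly $\binom{n}{r/2}$, not $2^{r/2}\binom{n}{r/2}$.

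For comparison, the paper avoids constructing any polynomial: it transposes the problem combinatorially by forming the code $C\subseteq\{0,1\}^{m}$, $m=|S|$, whose codewords are the evaluation vectors over $S$ of all parities $\beta_1 s_{i_1}+\cdots+\beta_{r/2}s_{i_{r/2}}$ with $\beta\neq 0^{r/2}$; the hypothesis forces every pairwise distance in $C$ to lie in $[(1/2-\epsilon)m,(1/2+\epsilon)m]$, so $C$ is a nearly balanced code of length $m$ with $(2^{r/2}-1)\binom{n}{r/2}$ codewords, and the lemma (including both terms of the $\min$ and the $\log(n/r)$ factor) then follows by quoting the known bounds for such codes (MRRW together with Section~7 and (3) of \cite{AGHP90} and the bound in \cite{A09}). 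If you want to salvage your direct approach, you would in effect be re-deriving the LP dual of that argument, which is substantially more work than the reduction the paper uses.
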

\begin{proof} Let $S=\{s^{(1)},\ldots,s^{(m)}\}$ and $I=\{(i_1,\ldots,i_{r/2})\ |\ 1\le i_1<i_2<\cdots<i_{r/2}\le 1\}$.
Consider
$$C=\{(\beta_1s^{(1)}_{i_1}+\cdots+\beta_{r/2}s^{(1)}_{i_{r/2}},\ldots,
\beta_1s^{(m)}_{i_1}+\cdots+\beta_{r/2}s^{(m)}_{i_{r/2}})\ |\ i\in I,\beta\in\{0,1\}^{r/2}\backslash\{0^{r/2}\}\}.$$
Then for two distinct $u,v\in C$ we have (for some $i_1,\ldots,i_r$ and $(\alpha_1,\ldots,\alpha_r)\in \{0,1\}^r\backslash\{0^r\}$)
$$dist(u,v)=wt(u+v)=m\cdot \Pr_{s\in U_S}[\alpha_1 s_{i_1}+\cdots+\alpha_r s_{i_r}]\in
\left[ \left(\frac{1}{2}-\epsilon\right)m,\left(\frac{1}{2}+\epsilon\right)m\right].$$
Therefore $C$ is an $\epsilon$-balance error-correcting code size $(2^{r/2}-1){n\choose r/2}$. By MRRW bound,\cite{MRRW}, for binary code with the results in Section 7 and (3)
in~\cite{AGHP90} and the bound in~\cite{A09}, the result follows.
\end{proof}

We now prove Theorem~\ref{LB1}
\begin{proof} Let $S$ be a $\epsilon$-almost $k$-wise set in the $L_\infty$ norm.
Let $r\ge 2$ be an even constant such that $\epsilon>1\left/\left(2^{r/4}{n-k+r\choose r/2}^{1/2}\right)\right.$.
For $\xi\in\{0,1\}^{k-r}$
consider the sets
$S_\xi=\{s\in S\ |\ (s_1,\ldots,s_{k-r})=\xi\}$. Obviously, $\{S_\xi\}_{\xi\in\{0,1\}^{k-r}}$
is a partition of $S$.

Let $I=\{k-r+1,k-r+2,\ldots,n\}$.
For distinct $i_1,\ldots,i_r\in I$, $\alpha\in\{0,1\}^r\backslash\{0^r\}$, $\xi=\xi_1,\ldots,\xi_{k-r}\in\{0,1\}^{k-r}$,
and a random uniform $x\in S$ we have
\begin{eqnarray*}
\Pr[\alpha_1x_{i_1}+\cdots+\alpha_rx_{i_r}=1 |\ x\in S_\xi]&=&\frac{\Pr[\alpha_1x_{i_1}+\cdots+\alpha_rx_{i_r}=1,(x_1,\ldots,x_{k-r})=\xi]}
{\Pr[(x_1,\ldots,x_{k-r})=\xi]}\\
&=& \frac{\Pr[\alpha_1x_{i_1}+\cdots+\alpha_rx_{i_r}=1,(x_1,\ldots,x_{k-r})=\xi]}
{\sum_{u\in\{0,1\}^r}\Pr[(x_{i_1},\ldots,x_{i_r})=u,(x_1,\ldots,x_{k-r})=\xi]}\\
&\ge& \frac{2^{r-1}\left(\frac{1}{2^k}-\epsilon\right)}{2^r\left(\frac{1}{2^k}+\epsilon\right)}\ge \frac{1}{2}-2^{k+1}\epsilon.
\end{eqnarray*}
In the same way $$\Pr[\alpha_1x_{i_1}+\cdots+\alpha_rx_{i_r}=1 |\ x\in S_\xi]\le \frac{1}{2}+2^{k+1}\epsilon.$$
Therefore, by Lemma~\ref{MRRW}, for $\epsilon>1\left/\left(2^{r/4}{n-k+r\choose r/2}^{1/2}\right)\right.$,
$$|S|=\sum_{\xi\in \{0,1\}^{k-r}}|S_\xi|= 2^{k-r}\cdot \Omega\left(\frac{r\log(|I|/r)}{(2^{k+1}\epsilon)^2\log(1/(2^{k+1}\epsilon))}\right)
=\Omega\left(\frac{\log n}{2^k\epsilon^2 \log(1/2^{k+1}\epsilon)}\right).$$
\end{proof}

We now prove Theorem~\ref{LB2}
\begin{proof}
For every distinct $i_1,\ldots,i_k\in [n]$ and $\alpha\in\{0,1\}^k\backslash\{0^k\}$
and any random uniform $x\in S$
\begin{eqnarray*}
\left|\Pr[\alpha_1x_{i_1}+\cdots+\alpha_k x_{i_k}=1]-\frac{1}{2}\right|&\le& \sum_{\alpha_1\xi_1+\cdots+\alpha_k\xi_k=1} \left|\Pr[x_{i_1}=\xi_1,\ldots,x_{i_k}=\xi_k]-\frac{1}{2^k}\right|\\
&\le& \sum_{\xi\in\{0,1\}^{k}} \left|\Pr[x_{i_1}=\xi_1,\ldots,x_{i_k}=\xi_k]-\frac{1}{2^k}\right|\le \epsilon.\\
\end{eqnarray*}
Therefore, by Lemma~\ref{MRRW}, for $\epsilon>1/n^{k/5}$, we have
$$|S|=\Omega\left(\frac{k\log n}{\epsilon^2\log(1/\epsilon)}\right).$$
\end{proof}

{\bf Acknowledgement.} We would like to thank Noga Alon for pointing out the paper~\cite{A09}.

\end{document}